\newcounter{savenumi}
\newtheorem{theoremfoo}{Theorem}%[section] %by chapter in report style
\newenvironment{theorem}{\pagebreak[1]\begin{theoremfoo}}{\end{theoremfoo}}
\newtheorem{propositionfoo}[theoremfoo]{Proposition}
\newtheorem{lemmafoo}[theoremfoo]{Lemma}
\newenvironment{lemma}{\pagebreak[1]\begin{lemmafoo}}{\end{lemmafoo}}
\newtheorem{conjecturefoo}[theoremfoo]{Conjecture}
\newtheorem{corollaryfoo}[theoremfoo]{Corollary}
\newenvironment{corollary}{\pagebreak[1]\begin{corollaryfoo}}{\end{corollaryfoo}}
\newtheorem{exercisefoo}{Exercise}
\newtheorem{openfoo}[theoremfoo]{Question}
\newtheorem{nttn}[theoremfoo]{Notation}
\newtheorem{dfntn}[theoremfoo]{Definition}
\newenvironment{definition}{\pagebreak[1]\begin{dfntn}\rm}{\end{dfntn}}
\newenvironment{proof}
    {\pagebreak[1]{\narrower\noindent {\bf Proof:\quad\nopagebreak}}}{\QED}
\newcommand{\dash}{{\rm\mbox{-}}}
\newcommand{\floor}[1]{\left\lfloor#1\right\rfloor}
\newcommand{\ceiling}[1]{\left\lceil#1\right\rceil}
\newcommand{\DTIME}{{\rm DTIME}}
\newcommand{\NTIME}{{\rm NTIME}}
\newcommand{\NE}{{\rm NE}}
\newcommand{\E}{{\rm E}}
\newcommand{\SAT}{{\rm SAT}}
\newcommand{\NP}{{\rm NP}}
\renewcommand{\P}{{\rm P}}      % used to be paragraph sign
\newcommand{\coNP}{{\co\NP}}
\newcommand{\T}{{\rm T}}
\def\nre.{$n$\/-r.e.}
\newcommand{\scrod}{\quad\nopagebreak}
\newcommand{\co}{{\rm co}\dash}
\newcommand{\EXP}{{\rm EXP}}
\newcommand{\NEXP}{{\rm NEXP}}
\newtheorem{factfoo}[theoremfoo]{Fact}
\newcommand{\BPP}{{\rm BPP}}
\newtheorem{propertyfoo}[theoremfoo]{Property}
\def\@makechapterhead#1{ \vspace*{50pt} { \parindent 0pt \raggedright 
 \ifnum \c@secnumdepth >\m@ne \huge\bf \@chapapp{} \thechapter. \par 
 \vskip 20pt \fi \Huge \bf #1\par 
 \nobreak \vskip 40pt } }
\def\@sect#1#2#3#4#5#6[#7]#8{\ifnum #2>\c@secnumdepth
     \def\@svsec{}\else 
     \refstepcounter{#1}\edef\@svsec{\csname the#1\endcsname.\hskip 1em }\fi
     \@tempskipa #5\relax
      \ifdim \@tempskipa>\z@ 
        \begingroup #6\relax
          \@hangfrom{\hskip #3\relax\@svsec}{\interlinepenalty \@M #8\par}
        \endgroup
       \csname #1mark\endcsname{#7}\addcontentsline
         {toc}{#1}{\ifnum #2>\c@secnumdepth \else
                      \protect\numberline{\csname the#1\endcsname}\fi
                    #7}\else
        \def\@svsechd{#6\hskip #3\@svsec #8\csname #1mark\endcsname
                      {#7}\addcontentsline
                           {toc}{#1}{\ifnum #2>\c@secnumdepth \else
                             \protect\numberline{\csname the#1\endcsname}\fi
                       #7}}\fi
     \@xsect{#5}}
\def\@begintheorem#1#2{\it \trivlist \item[\hskip \labelsep{\bf #1\ #2.}]}
\def\@opargbegintheorem#1#2#3{\it \trivlist
      \item[\hskip \labelsep{\bf #1\ #2\ (#3).}]}
\newif\ifshortconferences
\newif\ifmediumconferences
\def\ending#1{{\count1=#1\relax
% mod out by 100 first
\count2=\count1
\divide\count2 by 100
\multiply\count2 by 100
\advance\count1 by -\count2
\ifnum\count1=11
th%
\else \ifnum\count1=12
th%
\else \ifnum\count1=13
th%
\else 
\count2=\count1
\divide\count1 by 10
\multiply\count1 by 10
\advance\count2 by -\count1
\ifnum\count2=1
st%
\else \ifnum\count2=2
nd%
\else \ifnum\count2=3
rd%
\else th%
\fi\fi\fi\fi\fi\fi
}}
\def\STOC{\conf{STOC}}
\def\Proceedings{\ifshortconferences Proc.\else\ifmediumconferences Proc.\else Proceedings\fi\fi}
\def\Proceedingsofthe{\ifshortconferences Proc.\else\ifmediumconferences Proc.\else Proceedings of the\fi\fi}
\newcounter{confnum}
\def\conf#1#2{%
\setcounter{confnum}{#2}%
\addtocounter{confnum}{-\csname #1zero\endcsname}%
\ifnum\value{confnum}=1%
\expandafter\ifx\csname #1One\endcsname\relax%
\Proceedingsofthe\ \arabic{confnum}\ending{\value{confnum}}\ \csname #1name\endcsname%
\else \csname #1One\endcsname\fi%
\else%
\Proceedingsofthe\
\arabic{confnum}\ending{\value{confnum}}\ \csname #1name\endcsname\fi}
\def\qsym{\vrule width0.7ex height0.9em depth0ex}
\newif\ifqed\qedtrue
\def\noqed{\global\qedfalse}
\def\qed{\ifqed{\penalty1000\unskip\nobreak\hfil\penalty50
\hskip2em\hbox{}\nobreak\hfil\qsym
\parfillskip=0pt \finalhyphendemerits=0\par\medskip}\fi\global\qedtrue}
\def\eqnqed{\noqed
	\def\@tempa{equation}
	\ifx\@tempa\@currenvir\def\@eqnnum{\qsym}%
	\addtocounter{equation}{-1}\else%
    \def\@@eqncr{\let\@tempa\relax
    \ifcase\@eqcnt \def\@tempa{& & &}\or \def\@tempa{& &}%
      \else \def\@tempa{&}\fi
     \@tempa {\def\@eqnnum{{\qsym}}\@eqnnum}%  [changed theequation to @eqnnum]
     \global\@eqnswtrue\global\@eqcnt\z@\cr}\fi}
\def\eqnlabel#1#2{\if@filesw {\let\thepage\relax%
   \def\protect{\noexpand\noexpand\noexpand}%
   \edef\@tempa{\write\@auxout{\string
      \newlabel{#2}{{{#1}}{\thepage}}}}%
   \expandafter}\@tempa%
   \if@nobreak \ifvmode\nobreak\fi\fi\fi%
	\def\@tempa{equation}
	\ifx\@tempa\@currenvir\def\theequation{{#1}}% 
	\addtocounter{equation}{-1}\else%
    \def\@@eqncr{\let\@tempa\relax
    \ifcase\@eqcnt \def\@tempa{& & &}\or \def\@tempa{& &}%
      \else \def\@tempa{&}\fi
     \@tempa {\def\@eqnnum{{#1}}\@eqnnum}% [changed theequation to @eqnnum]
     \global\@eqnswtrue\global\@eqcnt\z@\cr}\fi}
\def\QED{\qed}
\newcommand{\sparse}{{\rm SPARSE}}
\newcommand{\tally}{{\rm TALLY}}
\newcommand{\NT}{\NTIME}
\newcommand{\DT}{\DTIME}
\newcommand{\pad}{{\rm padding}}
\newcommand{\Pad}{{\rm Padding}}
\newcommand{\Ppoly}{{\rm P/Poly}}
\newcommand{\D}{{\rm Density}}
\newcommand{\nexpd}{\mbox{\rm {Nonexponentially-Dense-Class}}}
\begin{document}

\date{}

\title{NE is not NP Turing Reducible to Nonexpoentially Dense NP Sets
%Separating NEXP from $\NP_{\rm T}({\NP\cap\mbox{Nonexponentially-Dense-Class}})$
}
%\mbox{Subexponentially\ Dense\ Languages}}$}
%via Gaussian Elimination

\author{
Bin Fu\thanks{This research is supported in part by
National Science Foundation Early Career Award 0845376.}
\\
\\
Department of Computer Science\\
 University of Texas--Pan American\\
Edinburgh, TX 78541, USA\\
Emails: binfu@cs.panam.edu
%Phone: 956-381-3635. Fax: 956-384-5099.
\\
%({\bf Not for Distribution})
%\\
 }
\maketitle

\begin{abstract}
A long standing open problem in the computational complexity theory
is to separate $\NE$ from $\BPP$, which is a subclass of $\NP_{\rm
T}({\NP})\cap\Ppoly$. In this paper, we show that $\NE\not\subseteq
\NP_{\rm T}(\NP\cap\nexpd)$, where $\nexpd$ is the class of
languages $A$ without exponential density (for each constant $c>0$,
$|A^{\le n}|\le 2^{n^c}$ for infinitely many integers $n$). Our
result implies $\NE\not\subseteq \NP_{\rm T}({\pad(\NP, g(n))})$ for
every time constructible super-polynomial function $g(n)$ such as
$g(n)=n^{\ceiling{\log\ceiling{\log n}}}$, where $\Pad(\NP, g(n))$
is class of all languages $L_B=\{s10^{g(|s|)-|s|-1}:s\in B\}$ for
$B\in \NP$. We also show $\NE\not\subseteq
\NP_{\T}(\P_{tt}(\NP)\cap\tally).$
\end{abstract}

%\newpage

\section{Introduction}
Separating the complexity classes has been one of the central
problems in complexity theory. Separating NEXP from P/Poly is a long
standing fundamental open problem in the computational complexity
theory. We do not even know how to separate NEXP from BPP, which is
a subclass of $\NP_{\rm T}({\NP})\cap\Ppoly$ proved by
Adleman~\cite{Adleman}.

Whether sparse sets are hard for complexity classes plays an
important role in the computational complexity theory (for examples,
\cite{BermanHartmanis,Mahaney,OgiwaraWatanabe01,KarpLipton}). It is
well known that $\Ppoly$ is the same as the class of languages that
are truth table reducible to tally sets ($\Ppoly=\P_{tt}(\tally)$).
The combination of bounded number of queries and density provides an
approach to characterize the complexity of the nonuniform
computation models. The partial progress for separating exponential
time classes from nonuniform polynomial time classes are shown
in~\cite{Watanabe87,Fu95,HarkinsHitchcock07,Hitchcock06,LutzMayordomo94}.
 Let $\nexpd$ be
the class of languages $A$ without exponential density (for each
constant $c>0$, $|A^{\le n}|\le 2^{n^c}$ for infinitely many
integers $n$). Improving Hartmanis and Berman's separation
$\E\not\subseteq \P_{m}(\nexpd)$~\cite{BermanHartmanis}, Watanabe
showed $\E\not\subseteq \P_{btt}(\nexpd)$. Watanabe's result was
improved by two research groups independently %at almost the same time
%~\cite{Fu95,LutzMayordomo94}
with incomparable results that $\E\not\subseteq
\P_{n^{1-\epsilon}-tt}(\nexpd)$ by Lutz and
Mayordomo~\cite{LutzMayordomo94}, and $\EXP\not\subseteq
\P_{n^{1-\epsilon}-\T}(\nexpd)$
 and $\E\not\subseteq
\P_{n^{{1\over 2}-\epsilon}-\T}(\nexpd)$ by Fu~\cite{Fu95}. Fu's
results were improved to $\E\not\subseteq
\P_{n^{1-\epsilon}-\T}(\nexpd)$ by Hitchcock~\cite{Hitchcock06}.  A
recent celebrated progress was made by Williams separating NEXP from
ACC~\cite{Williams10}. It is still an open problem to separate NEXP
from $\P_{O(n)-tt}(\tally)$.
%The linear number of queries is a serious barrier in making progress in line.

The nondeterministic time hierarchy was separated in the early
research of complexity theory by Cook~\cite{Cook-nondet-hierarchy},
Serferas, Fischer, Meyer~\cite{SeiferasFischerMeyer78}, and
Zak~\cite{Zak83}.
 A separation with immunity among nondeterministic computational
complexity classes was derived by Allender, Beigel, Hertranpf and
Homer~\cite{AllenderBeigelHertrampfHomer}. The difference between NE
and NP has not been fully solved. One of the most interesting
problems between them is to separate NE from $\P_{\T}(\NP)$.
 Fu, Li and Zhong~\cite{FuLiZhong94} showed  $\NE\not\subseteq
\P_{n^{o(1)-\T}}(\NP)$. Their result was later improved by Mocas
\cite{Mocas96} to $\NEXP\not\subseteq \P_{n^c-\T}(\NP)$ for any
constant $c>0$. Mocas's result is optimal with respect to
relativizable proofs, as Buhrman and Torenvliet
\cite{BuhrmanTorenvliet94} showed an oracle relative to which
$\NEXP=\P_{\rm T}({\NP})$.  Buhrman, Fortnow and
Santhanam~\cite{BuhrmanFortnowSanthanam09} and Fu, Li and
Zhang~\cite{FuLiZhang09} showed $\NEXP=\P_{n^c-\T}({\NP})/n^c$ for
every constant $c>0$ (two papers appeared in two conferences with a
similar time). Fu, Li and Zhang showed that NEXP is not reducible to
tally sets by the polynomial time nondeterministic Turing reductions
with the number of queries bounded by a sub-polynomial function
$g(n)$ such as $g(n)=n^{1\over \log\log n}$ ($\NE\not\subseteq
\NP_{g(n)-\T}(\tally)$)\cite{FuLiZhang09}.

In this paper, we show that $\NE\not\subseteq \NP_{\rm
T}(\NP\cap\nexpd)$. Our result implies $\NE\not\subseteq \NP_{\rm
T}({\pad(\NP, g(n))})$ for every time constructible super-polynomial
function $g(n)$ such as $g(n)=n^{\ceiling{\log\ceiling{\log n}}}$,
where $\Pad(\NP, g(n))$ is the class of all languages
$L_B=\{s10^{g(|s|)-|s|-1}:s\in B\}$ for $B\in \NP$. We also show
$\NE\not\subseteq \NP_{\T}(\P_{tt}(\NP)\cap\tally).$

This paper is organized as follows. Some notations are given in
section~\ref{notations-sec}. In section~\ref{overview-sec}, we give
a  brief description of our method to prove the main result. In
section~\ref{main-theorem-sec}, we separate $\NE$ from $ \NP_{\rm
T}(\NP\cap\nexpd)$. In section~\ref{NP-low-sec}, we show how to use
the padding method to derive sub-exponential density problems in the
class $\NP$. In section~\ref{NE-tally-sec}, we separate $\NE$ from $
\NP_{\rm T}(\P_{tt}(\NP)\cap\tally)$. The conclusions are given in
section~\ref{conclusion-sec}.

\section{Notations}\label{notations-sec}
Let $N=\{0,1,2,\cdots\}$ be the set of all natural numbers. Let
$\Sigma=\{0,1\}$ be the alphabet for all the languages in this
paper. The length of a string $s$ is denoted by $|s|$. Let $A$ be a
language. $A^{\le n}$ is the subset of strings of length at most $n$
in $A$. $A^{=n}$ is the subset of strings of length $n$ in $A$.
 For a finite set $X$, let $|X|$ be the number of elements in $X$.
 For a Turing machine $M(.)$, let $L(M)$ be the language accepted by
 $M$. We use a pairing function $(.,.)$ with $|(x,y)|=O(|x|+|y|)$.

For a function $t(n):N\rightarrow N$,  let $\DT(t(n))$ be the class
of languages accepted by deterministic Turing machines in $O(t(n))$
time, and $\NT(t(n))$ be the class of languages accepted by
nondeterministic Turing machines in $O(t(n))$ time. Define the
exponential time complexity classes: $\E=\cup_{c=1}^{\infty}
\DT(2^{cn})$, $\EXP=\cup_{c=1}^{\infty} \DT(2^{n^c})$,
$\NE=\cup_{c=1}^{\infty} \NT(2^{cn})$ and $\NEXP=\cup_{c=1}^{\infty}
\NT(2^{n^c})$.

A language $L$ is {\it sparse} if for some constant $c>0$, $|L^{\le
n}|\le n^c$ for all large $n$.  Let $\sparse$ represent all sparse
languages. Let $\tally$ be the class of languages with alphabet
$\{1\}$.

%We assume that $\emptyset$ be the empty oracle.

Assume that $M(.)$ is an oracle Turing machine.  A decision
computation
 $M^A(x)$ returns either $0$ or $1$ when the input is $x$ and
oracle is $A$.

Let $\le^{\P}_r$ be a type of polynomial time reductions, and
$\textsl{S}$ be a class of languages. $\P_r(\textsl{S})$ is the
class of languages $A$ that are reducible to some languages to
$\textsl{S}$ via $\le^P_r$ reductions. In particular, $\le_m^{\P}$
is the polynomial time {\it many-one reduction}, and $\le_{\T}^{\P}$
is the polynomial time {\it Turing reduction}.

For a class $C$ of languages, we use $\NP_{\rm T}(C)$ to represent
the class of languages that can be reducible to the languages in $C$
via polynomial time nondeterministic Turing reductions.

For a nondecreasing function $d(n):N\rightarrow N$, define
$\D(d(n))$ to be the class of languages $A$ with $|A^{\le n}|\le
d(n)$ for all sufficiently large $n$.

For a function $f(n): N\rightarrow N$, it is {\it time
constructible} if given $n$, $f(n)$ can be computed in $O(f(n))$
steps by a deterministic Turing machine.

A function $d(n):N\rightarrow N$ is {\it nonexponential} if for
every constant $c>0$, $d(n)< 2^{n^c}$ for infinitely many integers
$n$. $\nexpd$ is the class of languages $A$ whose density function
$d_A(n)=|A^{\le n}|$ is nonexponential.

\section{Overview of Our Method}\label{overview-sec}

We give a brief description about our method in this section. Our
main theorem is proved by contradiction. Assume that $\NEXP\subseteq
\NP_{\T}(S)$, where $S$ is a language in both $\NP$ and $\nexpd$.
Since $S$ is not of exponential density, we can find a function
nondecreasing unbounded function $e(1^n)$ that is computable in
$2^{n^{O(1)}}$ time and satisfies $|S^{\le n}|\le 2^{n^{1\over
e(1^n)^2}}$ for infinitely many integers $n$. Let $h(n)=n^{e(1^n)}$.
Thus, $h(n)$ is super-polynomial function.

Our main technical contribution is a counting method to be combined
with the classical translational method in deriving the separation.
Select an arbitrary language $L_0$ in $\DT(2^{h(n)})$. We define the
language $L_1=\{x10^{h(|x|)-|x|-1}: x\in L_0\}$. This converts $L_0$
into a language in $\NEXP$. Using the assumption $\NEXP\subseteq
\NP_{\T}(S)$, we have a polynomial time oracle Turing machine $M_1$
to accept $L_1$ with oracle $S$.

Define another language $L_2=\{1^n0m: m\le 2^n$ and there are at
least $m$ different strings $z_1,\cdots, z_m$ that are queried by
$M_1$ with some input of length $h(n)\}$. We can also show that
$L_2$ is also in $\NEXP$.  When $S$ has a subexponential number of
elements with length at most $h(n)^{O(1)}$, we show that the largest
$m$ with $1^n0m\in L_2$ has  $m< 2^n$.

In the next, we spend $2^{n^{O(1)}}$ time to find the largest $m$,
which will be denoted by $m_n$. This can be easily done since $L_2$
is in $\NP_{\T}(\NP)$.

For $m_n$ with $m_n< 2^n$, consider a nondeterministic computation
that given an input $(x,m_n)$ with $n=|x|$, it guesses all the
strings $z_1,\cdots, z_{m_n}$, which are queried by $M_1$ by inputs
of length $h(n)$, of $S$ in a path. Thus, any query like $y\in S?$
is identical to check if $y$ is equal to one of elements in
$z_1,\cdots, z_{m_n}$. This is an nondeterministic computation of
exponential time. It can be converted into a problem in
$\NP_{\T}(\NP)$. It can be simulated in a deterministic
$2^{n^{O(1)}}$ time. Since there are infinitely many integers $n$
with
 $|S^{\le n}|\le
2^{n^{1\over e(1^n)^2}}$, we have infinitely many integers
$n_1,n_2,\cdots$ to meet this case with $m_{n_i}< 2^{n_i}$.
% We consider the problem $(x, m_{n_i})$ with $|x|=n_i$ that $M_1^S(x)$
%accepts on a path that has at most $m_{n_i}$ strings in $S$ that are
%queried by $M_1$ with strings of length $h(|x|)$.
This brings a $2^{n^{O(1)}}$ time deterministic Turing machine $M_*$
that $L_0^{=n_i}=L(M_*)^{=n_i}$ for some
 for
infinitely many integers $n_i$. We can construct $L_0$ in
$\DT(2^{h(n)})$ to make it impossible using the standard diagonal
method. This brings a contradiction.

\section{Main Separation Theorem}\label{main-theorem-sec}

In this section, we present our main separation theorem. The theorem
is achieved by the translational method, which is combined with a
counting method to count the number of all possible strings queried
by nondeterministic polynomial time oracle Turing machine.

\begin{definition}\label{query-def}\scrod
\begin{itemize}
\item
Let $M$ be an oracle nondeterministic Turing machine. Let $a_1\cdots
a_{i-1}$ be a $0,1$-sequence,  and $y$ be an input for $M$. Define
$H(M(y), a_1\cdots a_{i-1})$ to be the set of all  strings $z$ that
are queried by $M(y)$ at the $i$-th time at some path assuming $M$
receives answers $a_1,\cdots, a_{i-1}$ for its first $i-1$ queries
from the oracle (the answer for each query is either `0' or `1' from
the oracle).
\item
For a nondeterministic oracle Turing machine $M(.)$ and oracle $A$,
and an integer $k$, define $Q(M, A, k)$ to be the set all strings
$z$ in $A$ such that $z\in H(M(y), a_1\cdots a_{i-1})$  for some
string $y$ of length $k$ and some $a_1\cdots a_{i-1}\in \{0,1\}^*$.
\end{itemize}
\end{definition}

\begin{lemma}\label{ne-nexp-lemma}
Let $\Gamma$ be a class of languages and be closed under
$\le_m^{\P}$-reductions. Then $\NE\subseteq \Gamma$ if and only if
$\NEXP\subseteq \Gamma$.
\end{lemma}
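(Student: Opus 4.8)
The equivalence has an easy direction and a direction that needs a padding (``translational'') argument, and the whole content is in choosing the \emph{right amount} of padding.

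The ($\Leftarrow$) direction is immediate: $\NT(2^{cn})\subseteq\NT(2^{n^2})$ for all large $n$, so $\NE\subseteq\NEXP$; hence $\NEXP\subseteq\Gamma$ trivially gives $\NE\subseteq\Gamma$ (and this needs no closure hypothesis).

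For ($\Rightarrow$), assume $\NE\subseteq\Gamma$ and fix an arbitrary $L\in\NEXP$, say $L\in\NT(2^{n^c})$ for a constant $c$; we may assume $c\ge 2$, since for $c=1$ we already have $L\in\NE\subseteq\Gamma$. The idea is to pad each input $x$ out to length exactly $|x|^c$ (\emph{not} to exponential length). Concretely, set
\[
  L' \;=\; \{\, x\,1\,0^{\,|x|^{c}-|x|-1} \;:\; x\in L,\ |x|\ge 2 \,\},
\]
so that every string of $L'$ has length $|x|^{c}$. First I would check $L'\in\NE$: a nondeterministic machine on input $w$ of length $m$ tests whether $m$ is a perfect $c$-th power, recovers the unique $n$ with $n^{c}=m$ (binary search, time $\poly(m)$), verifies that $w=x10^{\,n^{c}-n-1}$ for its length-$n$ prefix $x$, and then runs the given $\NT(2^{n^{c}})$-machine for $L$ on $x$. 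The simulation takes nondeterministic time $2^{O(n^{c})}=2^{O(m)}$, so $L'\in\NT(2^{O(m)})\subseteq\NE$, and therefore $L'\in\Gamma$. Next, the map $x\mapsto x10^{\,|x|^{c}-|x|-1}$ is a $\le_m^{\P}$-reduction from $L$ to $L'$ (for $|x|\ge 2$): its output has length $|x|^{c}=\poly(|x|)$ and is clearly polynomial-time computable, and $x\in L\iff x10^{\,|x|^{c}-|x|-1}\in L'$ by construction. Finally, since $L$ and $L'$ agree up to the finitely many strings of length $\le 1$ (a finite, hence harmless, discrepancy that I would absorb either into the reduction or by noting $\Gamma\supseteq\NE$ contains every finite language), closure of $\Gamma$ under $\le_m^{\P}$ gives $L\in\Gamma$. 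As $L\in\NEXP$ was arbitrary, $\NEXP\subseteq\Gamma$.

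\textbf{Main point of care.} The only real subtlety — and the thing that makes the lemma work at all — is the choice of padding length $n^{c}$: padding further (e.g.\ to length $2^{n^{c}}$) would make the decoding reduction take exponential time and break the argument, while padding to exactly $n^{c}$ turns the running time $2^{n^{c}}$ of the $\NEXP$-machine into $2^{O(m)}$ in the new length $m=n^{c}$, which is exactly what is needed to land in $\NE$ rather than merely back in $\NEXP$. The remaining steps (perfect-$c$-th-power test, unambiguous decoding of the padded form, the $2^{O(n^c)}=2^{O(m)}$ bookkeeping for the simulation overhead) are routine.
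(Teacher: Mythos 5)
Your proof is correct and follows essentially the same route as the paper's: pad $x$ to length $|x|^c$ via $x\mapsto x10^{|x|^c-|x|-1}$, observe the padded language lies in $\NE$, and invoke closure under $\le_m^{\P}$-reductions. The only difference is that you are more careful about the finitely many short strings for which $|x|^c-|x|-1<0$, a minor technicality the paper glosses over.
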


\begin{proof}
Since $\NE\subseteq \NEXP$, it is trivial that $\NEXP\subseteq
\Gamma$ implies $\NE\subseteq \Gamma$. We only prove that
$\NE\subseteq \Gamma$ implies $\NEXP\subseteq \Gamma$. Assume
$\NE\subseteq \Gamma$. Let $L$ be an arbitrary language in $\NEXP$.
Assume that $L\in \NT(2^{n^c})$ for some integer constant $c>1$. Let
$L'=\{x10^{|x|^c-|x|-1}:x\in L\}$. Since $L\in \NT(2^{n^c})$ with
the constant $c$, we have $L'\in\NE$. We have a
$\le^{\P}_m$-reduction $f(.)$ from $L$ to $L'$ with
$f(x)=x10^{|x|^c-|x|-1}$ ($L\le_m^{\P} L'$). Since $L'\in
\NE\subseteq \Gamma$ and $\Gamma$ is closed under
$\le_m^{\P}$-reductions, we have $L\in \Gamma$. Since $L$ is an
arbitrary language in $\NEXP$, we have $\NEXP\subseteq \Gamma$.
\end{proof}

\begin{lemma}\label{nondet-lemma}
Let $M_*(.)$ be a nondeterministic polynomial time oracle Turing
machine. Let $A$ be a language in $\NP$ and accepted by a polynomial
time Turing machine $M_A(.)$.  Then there is a nondeterministic
$mn^{O(1)}$ time Turing machine $N(.)$ such that given the input
$(m, M_*, M_A, 1^n)$,
\begin{itemize}
\item
 if $m\le |Q(M_*,A,n)|$, it outputs a subset of $m$ different
elements of $Q(M_*,A,n)$ in at least one path, and every path with
nonempty output gives a subset of $m$ different elements of
$Q(M_*,A,n)$; and
\item
 if $m> |Q(M_*,A,n)|$, it outputs empty set in every path.
\end{itemize}
\end{lemma}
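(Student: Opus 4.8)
The plan is to build $N$ as a machine that tries to guess $m$ distinct elements of $Q(M_*,A,n)$ one at a time, verifying each guess, and rejecting (outputting the empty set) on any path where the verification fails or where fewer than $m$ distinct elements can be certified. The central difficulty is that membership ``$z \in Q(M_*,A,n)$'' is itself an existential statement with two nested existential quantifiers — over an input $y$ of length $n$ and over an answer-sequence $a_1\cdots a_{i-1}$ — together with the requirement $z \in A$, which for $A \in \NP$ is yet another existential guess. Since $N$ is nondeterministic, all of these can be folded into guessing; the real work is bookkeeping the time bound and proving the two stated correctness properties (soundness: every nonempty output is a genuine size-$m$ subset of $Q(M_*,A,n)$; completeness: if $m \le |Q(M_*,A,n)|$ some path succeeds).

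First I would describe the single-element certification subroutine. Given a candidate string $z$, $N$ guesses a string $y \in \{0,1\}^n$, guesses an answer sequence $a_1\cdots a_{i-1} \in \{0,1\}^*$ (whose length is bounded by the polynomial running time of $M_*$, since $M_*$ makes only polynomially many queries), and simulates the path of $M_*(y)$ that receives exactly those answers, checking that $z$ is the string queried at the $i$-th step; this certifies $z \in H(M_*(y), a_1\cdots a_{i-1})$. Then $N$ runs $M_A(z, w)$ for a guessed witness $w$ of polynomial length to certify $z \in A$. If either check fails on the current path, $N$ halts with empty output on that path. Each such certification takes $n^{O(1)}$ time.

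Next I would assemble the full algorithm: $N$ repeats the certification subroutine $m$ times, guessing candidates $z_1, \dots, z_m$ in turn, and after each new candidate $z_j$ it checks $z_j \neq z_1, \dots, z_j \neq z_{j-1}$ (distinctness), again outputting the empty set on any path where this fails. If all $m$ rounds succeed, $N$ outputs $\{z_1, \dots, z_m\}$. The total time is $m \cdot n^{O(1)} = m n^{O(1)}$, as required. For soundness, observe that on any path producing a nonempty output, each $z_j$ was explicitly certified to lie in some $H(M_*(y_j), \cdots)$ with $z_j \in A$, hence $z_j \in Q(M_*,A,n)$, and the $z_j$ are pairwise distinct, so the output is a size-$m$ subset of $Q(M_*,A,n)$. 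For completeness, if $m \le |Q(M_*,A,n)|$, pick any $m$ distinct elements of $Q(M_*,A,n)$; by definition each has a witnessing input $y_j$, answer sequence, and (being in $A \in \NP$) an $\NP$-witness, so the path that guesses exactly these data succeeds and outputs that set. Finally, if $m > |Q(M_*,A,n)|$, then by the pigeonhole principle no path can certify $m$ pairwise distinct members of $Q(M_*,A,n)$, so every path outputs the empty set. I expect the only mildly delicate point to be confirming that the length of the answer-sequence $a_1\cdots a_{i-1}$ and of the $\NP$-witnesses are polynomially bounded so that the simulation and the overall time stay within $m n^{O(1)}$; this follows from $M_*$ and $M_A$ being polynomial-time machines.
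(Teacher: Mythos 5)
Your proposal is correct and follows essentially the same approach as the paper: guess one input $y_i$ of length $n$ per candidate, guess the path and answer prefix for $M_*(y_i)$ to certify $z_i\in H(M_*(y_i),a_1\cdots a_{i-1})$, guess an accepting path of $M_A$ to certify $z_i\in A$, check pairwise distinctness, and output the set or $\emptyset$ accordingly. The only cosmetic difference is that you phrase it as ``guess $z$ then verify'' while the paper phrases it as ``guess the input and path, which then determines $z$''---operationally these are the same nondeterministic computation, and both yield the $mn^{O(1)}$ time bound.
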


\begin{proof}
Let $M_A(.)$ be a polynomial time nondeterministic Turing machine
that accepts $A$, and run in time $n^{c_A}$ for a constant $c_A>0$.
 Let $M_*(.)$
have time bound $n^{c_*}$. We design a nondeterministic Turing
machine $N(.)$.

Let $N(.)$ do the following with input $(m, M_*, M_A, 1^n)$:

\begin{enumerate}[1.]
\item
\qquad guess strings $x_1,\cdots, x_m$ of length $n$;
\item
\qquad guess a path $p_i$ and a series of oracle answers
$a_{i,1}\cdots a_{i,j_i-1}$ for $M_*(x_i)$ for $i=1,\cdots, m$;
\item\label{H-query-line}
\qquad if $M_*(x_i)$ makes the $j_i$-th query $z_i$ on path $p_i$
assuming the first the $j_i-1$ oracle answers are $a_{i,1}\cdots
a_{i,j_i-1}$;
\item
\qquad then guess a path $q_i$ for $M_A(z_i)$
\item
\qquad if $z_1,\cdots, z_m$ are all different, and each $z_i$ is
accepted by $M_A(z_i)$ on path $q_i$
\item
\qquad then output $z_1,\cdots, z_m$
\item
\qquad else output the empty set $\emptyset$.
\end{enumerate}
We note that line~\ref{H-query-line} is to check if $z_i$ is in
$H(M_*(x_i), a_{i,1}\cdots a_{i,j_{i-1}})$. Since $M_*(.)$ runs in
time $n^{c_*}$, each $z_i$ is of length at most $n^{c_*}$. The
Turing machines $M_A(z_i)$ takes $|z_i|^{c_A}\le n^{c_*c_A}$ time to
accept $z_i$ for $i=1,\cdots,m$. Therefore, the total time of $N(.)$
with input $(m, M_*, M_A, 1^n)$ is $mn^{O(1)}$.

\end{proof}

%\section{Exponential Density}\label{main3-theorem-sec}

%In this section, we present an improved separation theorem.

\begin{lemma}\label{e(n)-lemma}
Assume that $S$ is in  $\NP$ and $S$ is nonexponentially dense. Then
there is a $2^{n^{O(1)}}$ time computable nondecreasing function
$e(1^n):N\rightarrow N$ such that
\begin{enumerate}[1.]
\item\label{square-root-condition-in-e(n)-lemma}
 $|S^{\le n}|\le 2^{n^{1\over e(1^n)^2}}$ for infinitely many integers $n$;
\item\label{square-condition-in-e(n)-lemma}
$e(1^{n^2})\le 2e(1^n)$ for all $n$; and
\item
$\lim_{n\rightarrow\infty}e(1^n)=\infty$.
\end{enumerate}
\end{lemma}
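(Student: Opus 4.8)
The plan is to define $e(1^n)$ from the density function $d_S(n) = |S^{\le n}|$ directly, but damped so as to force both the constructibility bound and the growth condition $e(1^{n^2}) \le 2e(1^n)$. First I would unwind the definition of "nonexponentially dense": for every constant $c>0$ there are infinitely many $n$ with $d_S(n) < 2^{n^c}$. Applying this with $c = 1/k^2$ for each positive integer $k$ yields, for every $k$, infinitely many $n$ with $|S^{\le n}| \le 2^{n^{1/k^2}}$; call such an $n$ a \emph{$k$-good} integer. Since $S \in \NP$, the function $d_S(n)$ itself is computable in time $2^{n^{O(1)}}$ (enumerate all strings up to length $n$, run the $\NP$ machine with all nondeterministic guesses on each), so the predicate "$n$ is $k$-good" is decidable in time $2^{n^{O(1)}}$ once $k$ is fixed with $k \le n$.

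Now define $e(1^n)$ by: on input $1^n$, compute $\ell = \lceil \log_2 \log_2 n \rceil$ (a crude upper bound so that the search below fits in the time budget), and set $e(1^n)$ to be the largest integer $k$ with $1 \le k \le \ell$ such that every one of $n, \lfloor\sqrt{n}\rfloor, \lfloor\sqrt[4]{n}\rfloor, \ldots$ down through roughly $\ell$ iterated square roots — more precisely, such that $m$ is $k$-good for at least one $m$ in a suitable dyadic-root window around $n$; if no such $k\ge 1$ exists, set $e(1^n)=1$. The exact window must be chosen so that (a) infinitely many $n$ make the witness land on an honestly $k$-good integer for unboundedly large $k$, giving condition \ref{square-root-condition-in-e(n)-lemma}, and (b) passing from $n$ to $n^2$ at most doubles the value, giving condition \ref{square-condition-in-e(n)-lemma}. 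The standard trick is to make $e$ essentially a function of $\lfloor \log_2 \log_2 n \rfloor$, which exactly halves (up to $\pm 1$) when $n$ is squared, and then to intersect this with the $k$-goodness constraint; one checks that since there are infinitely many $k$-good integers for each $k$, and $k$-goodness at $m$ only needs to be inherited within a bounded $\log\log$-band, the value is forced to tend to infinity, giving condition \ref{e(n)-lemma}.3 as well. The $2^{n^{O(1)}}$ time bound is then immediate: the search is over $k \le \lceil\log\log n\rceil$ values, each requiring $2^{n^{O(1)}}$ time to test $k$-goodness on polynomially many root-windowed integers.

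The main obstacle is reconciling the three conditions simultaneously: condition \ref{square-root-condition-in-e(n)-lemma} wants $e(1^n)$ to be \emph{large} precisely at the $k$-good $n$'s (so that $2^{n^{1/e(1^n)^2}}$ is small enough to dominate $|S^{\le n}|$), while condition \ref{square-condition-in-e(n)-lemma} forces $e$ to grow so slowly that it cannot spike. The resolution is that "large" here only means tending to infinity, not fast — so tying $e$ to $\log\log n$ (hence to condition \ref{square-condition-in-e(n)-lemma} for free) while capping it by the largest $k$ for which a $k$-good integer is available near $n$ works, because for any fixed target $k_0$, all sufficiently large $n$ have a $k_0$-good integer in their root-window (there being infinitely many of each type and the windows covering all large scales), so the cap eventually exceeds $k_0$; meanwhile at the actual $k$-good $n$'s the cap is met with equality for $k$ as large as $\lceil\log\log n\rceil$ along a subsequence, which suffices. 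I would carry out the details by first fixing the precise root-window and the $\min(\lceil\log\log n\rceil, \cdot)$ truncation, then verifying \ref{square-condition-in-e(n)-lemma} purely arithmetically, then verifying \ref{e(n)-lemma}.3 and finally \ref{square-root-condition-in-e(n)-lemma} by choosing the subsequence $n_1 < n_2 < \cdots$ of integers that are simultaneously honestly $k$-good and at which the truncation is not active.
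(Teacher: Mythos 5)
Your proposal takes a genuinely different route from the paper and, as written, has two gaps.

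The paper's proof is a direct greedy incremental construction: starting from $e(1^0)=1$, the function is built phase by phase, incremented by $1$ only at the first suitable $t$ that (a) lies beyond the square $k^2$ of the previous increment point $k$ (this gives condition 2 by construction, since between consecutive increments the function is constant and the next increment is delayed past the square of the last), (b) is of a prescribed perfect-power form so that the root is an integer, and (c) itself satisfies $|S^{\le t}| \le 2^{t^{1/(e(1^k)+1)^2}}$. Condition 1 then holds at each increment point $t$ \emph{by construction}, because $e(1^t)$ is set to exactly the level for which $t$ witnesses the density bound, and condition 3 holds because nonexponential density yields infinitely many such increment opportunities. The function $e$ depends on the full history, not just a local window, and this is exactly what makes the three conditions cohere.

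Your plan instead gives a local closed-form definition of $e(1^n)$ via ``$k$-goodness in a root-window around $n$,'' capped by $\lceil\log\log n\rceil$, and this is where the argument breaks.

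\textbf{Gap 1.} The claim that ``for any fixed target $k_0$, all sufficiently large $n$ have a $k_0$-good integer in their root-window'' is not a consequence of nonexponential density. Nonexponential density only supplies \emph{infinitely many} $k_0$-good integers and says nothing about the gaps between them: they could sit at positions like $2^{2^{2^i}}$, in which case a window of bounded $\log\log$-width around a generic $n$ will contain no $k_0$-good integer at all. This invalidates the argument that the cap is eventually not binding and that $\lim_{n\to\infty} e(1^n) = \infty$.

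\textbf{Gap 2.} Condition 1 demands $|S^{\le n}| \le 2^{n^{1/e(1^n)^2}}$, i.e., that $n$ itself be $e(1^n)$-good, for infinitely many $n$. Under your window mechanism, $e(1^n)$ can be driven up by a neighbor $m \ne n$ that is $k$-good even when $n$ is not, and monotonicity of $|S^{\le \cdot}|$ does not transfer $k$-goodness from $m$ to $n$. Your plan to pick a subsequence ``at which the truncation is not active'' addresses the cap, not the window, which is the actual culprit. Worse, the assertion that ``at the actual $k$-good $n$'s the cap is met with equality for $k$ as large as $\lceil\log\log n\rceil$ along a subsequence'' conflates two statements: nonexponential density gives, for each \emph{fixed} $k$, infinitely many $k$-good $n$, but it does not give infinitely many $n$ that are $\lceil\log\log n\rceil$-good---that would require $|S^{\le n}|\le 2^{n^{o(1)}}$ along a subsequence, a strictly stronger hypothesis than the one available. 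Without that, the cap is not met with equality and the subsequence you want need not exist. The paper sidesteps both problems by only ever increasing $e$ at witness points, so the locations that matter for condition 1 are exactly the increment points, which satisfy it by fiat.
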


\begin{proof}
Let $e(1^0)=1$. We construct $e(1^n)$ at phase $n$. Assume that we
have constructed $e(1^1),\cdots, e(1^{t-1})$. Phase $t$ below is for
computing $e(1^t)$.

\vskip 10pt

 Phase $t$

\begin{enumerate}[1).]
\item
\qquad Let $k$ be the largest number less than $t$ with
$e(1^{k-1})<e(1^k)$.

\item\label{square-slow}
\qquad If $t\le k^2$, then let $e(1^t)=e(1^k)$, and enter Phase
$t+1$.

\item\label{t-exp-line}
\qquad If $t\not=j^{(e(1^k)+1)^2}$ for any integer $j$, then let
$e(1^t)=e(1^k)$, and enter Phase $t+1$.

\item\label{compute-s}
\qquad Compute $s=|S^{\le t}|$.

\item\label{square-condition}
\qquad If $s\le 2^{t^{1\over (e(1^k)+1)^2}}$, then let
$e(1^t)=e(1^k)+1$.

\end{enumerate}

End of Phase $t$.

\vskip 10pt

The purpose of line~\ref{t-exp-line} is to let $t=j^{(e(1^k)+1)^2}$
for some integer $j$ after this line. This makes $t^{1\over
(e(1^k)+1)^2}$ be an integer and makes the computation easy at
line~\ref{square-condition}. Checking the condition of the if
statement at line~\ref{t-exp-line} takes $t^{O(1)}$ time via a
binary search.
 Computing $s$ at step~\ref{compute-s} in Phase $t$ takes
$2^{t^{O(1)}}$ steps since $S\in \NP$. Thus,
 function $e(1^n)$ is computable in
$2^{n^{O(1)}}$ time. Since $S$ is nonexponentially dense, the if
condition in step~\ref{square-condition} can be eventually satisfied
and we have that $e(1^n)$ is unbounded.

 Step~\ref{square-condition} in Phase $t$ makes function $e(.)$
satisfy condition~\ref{square-root-condition-in-e(n)-lemma} in the
lemma.  Step~\ref{square-slow} and  Step~\ref{square-condition} in
Phase $t$ makes function $e(.)$ satisfy
condition~\ref{square-condition-in-e(n)-lemma} in the lemma. The
construction shows that $e(1^n)$ is nondecreasing since $e(1^t)\le
e(1^{t+1})$ for all integers $t$.

\end{proof}

\begin{lemma}\label{infinitely2-cases-lemma}
Assume that $t(1^n)$ is nondecreasing unbounded function and
$t(1^n)$ is computable in $2^{n^{O(1)}}$ time. Then there is a
language $L_0\in \DT(2^{n^{t(n)}})$ such that for every
deterministic Turing machine $M(.)$ in time $2^{n^{O(1)}}$,
$L(M)^{=n}\not= L_0^{=n}$ for all sufficiently large $n$.
\end{lemma}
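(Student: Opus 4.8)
The plan is a straightforward diagonalization that kills, at \emph{every} sufficiently large length, the language of every deterministic machine that runs in time $2^{n^{O(1)}}$. Fix a standard effective enumeration $M_1,M_2,\ldots$ of all deterministic Turing machines. On an input $x$ of length $n$, first compute $r(n):=\min\{t(1^n),n\}$; this costs $2^{n^{O(1)}}$ time by hypothesis, and since $t(1^n)$ is nondecreasing and unbounded we have $r(n)\le t(1^n)$, $r(n)\le n$, and $r(n)\to\infty$. Identify each string of length $n$ with a number in $\{0,\ldots,2^n-1\}$, and for $1\le i\le r(n)$ let $w_{i,n}$ be the string encoding $i$ (well defined since $r(n)\le n<2^n$). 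Declare $w_{i,n}\in L_0$ exactly when $M_i$ does \emph{not} accept $w_{i,n}$ within $2^{n^{r(n)-1}}$ steps, and put every other string of length $n$ outside $L_0$. Intuitively, at length $n$ the witness $w_{i,n}$ is reserved to ``kill'' $M_i$, for every $i$ up to the slowly growing threshold $r(n)$, using a clock $2^{n^{r(n)-1}}$ that is super-polynomial in $n$.

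For the time bound: on input $x$ with $|x|=n$ the algorithm spends $2^{n^{O(1)}}$ time to obtain $r(n)$, decodes $x$ in polynomial time to check whether $x=w_{i,n}$ for some $1\le i\le r(n)$, and if so runs a universal machine to simulate $2^{n^{r(n)-1}}$ steps of $M_i$ on $x$, at cost $\mathrm{poly}(2^{n^{r(n)-1}})=2^{O(n^{r(n)-1})}$. Since $r(n)\le t(1^n)$, this simulation term is $2^{O(n^{t(1^n)-1})}$, and because $n^{t(1^n)}/n^{t(1^n)-1}=n\to\infty$ it is $O(2^{n^{t(n)}})$ for all large $n$; likewise, for all large $n$ we have $t(1^n)\ge k$, where $2^{n^k}$ bounds the cost of computing $t$, so that term is absorbed as well. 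Dealing with the finitely many small $n$ by a lookup table gives $L_0\in\DT(2^{n^{t(n)}})$. The one technical device worth flagging — and the main place a naive argument would fail — is using the exponent $r(n)-1$ rather than $r(n)$: the resulting factor-$n$ slack is precisely what lets us absorb both the universal-machine overhead and the cost of evaluating the awkward function $t(1^n)$.

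For correctness, let $M$ be any deterministic machine running in time $O(2^{n^c})$ and let $i$ be its index, so $M_i=M$. Because $r(n)\to\infty$, for all large $n$ we have $i\le r(n)$, hence $w_{i,n}$ is a reserved witness; and because $r(n)\to\infty$, for all large $n$ we have $r(n)-1\ge c+1$, so the clock satisfies $2^{n^{r(n)-1}}\ge 2^{n^{c+1}}=(2^{n^c})^n$, which exceeds the running time $O(2^{n^c})$ of $M$ on length-$n$ inputs once $n$ is large. Therefore, for all sufficiently large $n$, the simulation of $M_i=M$ on $w_{i,n}$ runs to completion, so $w_{i,n}\in L_0\iff M$ rejects $w_{i,n}\iff w_{i,n}\notin L(M)$, giving $L(M)^{=n}\ne L_0^{=n}$. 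This holds for every such $M$ against the single fixed language $L_0$. The essential point — and the reason no repetition of machines in the enumeration is needed — is that the clock $2^{n^{r(n)-1}}$ is itself super-polynomial, so it eventually dominates every fixed polynomial exponent $c$, while $r(n)\to\infty$ guarantees each machine's witness is active at \emph{all} large lengths rather than at just one.
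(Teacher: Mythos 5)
Your proposal is correct and follows essentially the same route as the paper's own proof: a slice-by-slice diagonalization against polynomially many indices at each length, using a clock whose exponent tends to infinity but stays strictly below $t(1^n)$, so that both the simulation cost and the cost of evaluating $t$ are absorbed into $2^{n^{t(n)}}$. Your version is in fact a bit more careful than the paper's, which speaks of enumerating ``all deterministic Turing machines that run in at most $2^{n^{t(1^n)/3}}$ time for all large $n$'' (a set that is not effectively enumerable); your explicit clocked simulation at exponent $r(n)-1$ over a standard enumeration of all machines sidesteps that wrinkle while achieving the same effect.
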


\begin{proof}
Let $M_1,\cdots, M_k,\cdots$ be the list of all deterministic Turing
machines  that  each $M_k$ runs in at most $2^{n^{t(1^n)/3}}$ time
for all large $n$. The construction has infinitely phases for
$n=1,2,\cdots$. It is easy to see that for each $2^{n^{O(1)}}$ time
Turing machine $N(.)$, there is a $2^{n^{t(1^n)/3}}$ time Turing
machine $M_i(.)$ with $L(M_i)^{=n}=L(N)^{=n}$ for all large $n$.

Phase $n$:

%\qquad Let $M_1,\cdots, M_k$ be the set of all Turing machines to be
%considered in this phase with $k\le t(n)$ and $k\le n$.

\qquad Let $x_1,\cdots, x_n$ be the first $n$ $0,1$-strings of
length $n$ by the lexicographic order

\qquad  For $i=1,\cdots, n$, put $x_i$ into $L_0^{=n}$ if and only
if $L(M_i)(x_i)$ rejects.

End of Phase $n$.

According to the construction of phase $n$. The language $L_0$ can
be computed in deterministic time $n\cdot 2^n\cdot
2^{n^{t(n)/3}}<2^{n^{t(n)/2}}$ for all large $n$. By the
construction of $L_0$,  for each Turing machine $M_i$ that runs in
time $2^{n^{t(1^n)/3}}$, $L(M_i)^{=n}\not=L_0^{=n}$ for all large
$n$.

\end{proof}

Theorem~\ref{main-theorem} and Theorem~\ref{ne-main-theorem} are
basically equivalent. They are the main separation results achieved
in this paper. We will find more concrete complexity classes inside
$\NP\cap \nexpd$ in section~\ref{NP-low-sec}.

\begin{theorem}\label{main-theorem}
$\NEXP\not\subseteq \NP_{\rm T}(\NP\cap \nexpd)$.
\end{theorem}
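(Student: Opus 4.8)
The plan is to argue by contradiction, following the overview in Section~\ref{overview-sec}. Suppose $\NEXP \subseteq \NP_{\rm T}(S)$ for some $S \in \NP \cap \nexpd$. By Lemma~\ref{e(n)-lemma}, fix a $2^{n^{O(1)}}$-time computable nondecreasing unbounded function $e(1^n)$ with $|S^{\le n}| \le 2^{n^{1/e(1^n)^2}}$ for infinitely many $n$ and $e(1^{n^2}) \le 2e(1^n)$. Set $h(n) = n^{e(1^n)}$, a super-polynomial time-constructible function. Pick an arbitrary $L_0 \in \DT(2^{n^{t(n)}})$ where I will take $t(n)$ to be essentially $e(1^n)$ (adjusted by the constant-factor slack from condition~\ref{square-condition-in-e(n)-lemma} so the translations line up); the ultimate goal is to show $L_0$ can be computed by a $2^{n^{O(1)}}$-time deterministic machine on infinitely many input lengths, contradicting Lemma~\ref{infinitely2-cases-lemma}.

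The first main step is the translational padding: define $L_1 = \{x 1 0^{h(|x|)-|x|-1} : x \in L_0\}$, which lies in $\NT(2^{O(n)}) \subseteq \NEXP$ because a string of length $h(n)$ decodes an $x$ of length $n$ and we can run the $2^{n^{t(n)}} = 2^{n^{e(1^n)}}$-time decision procedure for $L_0$, which is polynomial in the input length $h(n) = n^{e(1^n)}$ up to the constructibility of $e$. By the assumption, there is a polynomial-time nondeterministic oracle machine $M_1$ with $L(M_1)^S = L_1$; say $M_1$ runs in time $n^{c}$, so on inputs of length $h(n)$ every queried string has length at most $h(n)^c$. The second step is the counting construction: let $q_n = |Q(M_1, S, h(n))|$ be the number of distinct elements of $S$ queried by $M_1$ over all inputs of length $h(n)$ and all guessed oracle-answer histories (Definition~\ref{query-def}). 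Since all such queries have length $\le h(n)^c$, we have $q_n \le |S^{\le h(n)^c}|$; for the infinitely many $n$ where the density bound of Lemma~\ref{e(n)-lemma} kicks in at length $h(n)^c = n^{c\,e(1^n)}$, using $e(1^{m^2}) \le 2 e(1^m)$ iterated a bounded number of times to control $e(1^{n^{c e(1^n)}})$ in terms of $e(1^n)$, we get $q_n \le 2^{(h(n)^c)^{1/e(\cdot)^2}} \le 2^{n^{o(e(1^n))}}$, in particular $q_n < 2^n$ for those $n$. This is the crux of the whole argument and the step I expect to be the main obstacle: making the bookkeeping on $e$ work so that "polynomially many iterations of the $e(1^{m^2}) \le 2e(1^m)$ inequality" genuinely keeps $h(n)^c$'s exponent-shrinking factor $1/e^2$ strong enough to beat the $c\,e(1^n)$ blow-up from $h$, yielding a bound strictly below $2^n$ on infinitely many lengths.

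The third step is to locate $q_n$ effectively. Using Lemma~\ref{nondet-lemma}, the machine $N(m, M_1, M_S, 1^{h(n)})$ nondeterministically outputs $m$ distinct elements of $Q(M_1, S, h(n))$ iff $m \le q_n$; hence the language $\{1^n 0 m : m \le q_n\}$ (suitably encoded) is in $\NP_{\rm T}(\NP) \subseteq \NEXP$, and by binary search in $2^{n^{O(1)}}$ deterministic time — computing each membership bit via the containment $\NEXP \subseteq \NP_{\rm T}(S) \subseteq \NP_{\rm T}(\NP)$ and then brute-forcing the $\NP_{\rm T}(\NP)$ computation in exponential time — we find the exact value $m_n = q_n$. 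For the infinitely many $n$ with $m_n < 2^n$, we deterministically guess a set $z_1,\dots,z_{m_n}$ of distinct strings, verify each $z_i \in S$ by brute force over $M_S$'s nondeterminism in time $2^{h(n)^{O(1)}} = 2^{n^{O(1)}}$, and — crucially — this verified set equals all of $Q(M_1,S,h(n))$, so for any string $y$ of length $h(n)$, simulating $M_1(y)$ while answering each oracle query "$w \in S$?" by "$w \in \{z_1,\dots,z_{m_n}\}$?" is faithful. Brute-forcing $M_1$'s nondeterminism over all exponentially many paths on input $y = x 1 0^{h(|x|)-|x|-1}$ of length $h(n)$ then decides $x \in L_0$ in deterministic time $2^{n^{O(1)}}$. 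This gives a $2^{n^{O(1)}}$-time deterministic machine $M_*$ with $L(M_*)^{=n} = L_0^{=n}$ for infinitely many $n$, contradicting Lemma~\ref{infinitely2-cases-lemma}. Finally, $\NEXP \not\subseteq \NP_{\rm T}(\NP \cap \nexpd)$ follows, and since $\NP_{\rm T}(\NP \cap \nexpd)$ is closed under $\le_m^{\P}$-reductions, Lemma~\ref{ne-nexp-lemma} gives the stated $\NE$-version as well.
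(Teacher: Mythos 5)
Your proposal reproduces the overall architecture of the paper's proof (contradiction assumption, Lemma~\ref{e(n)-lemma} to get $e$, $h(n)=n^{e(1^n)}$, the padded $L_1$, the counting language $L_2$, binary search for $m_n$, and Lemma~\ref{infinitely2-cases-lemma} for the diagonal contradiction), but the final decision step contains a genuine time-scale error that the paper's $L_3$ is specifically designed to avoid. You write that verifying $z_1,\dots,z_{m_n}\in S$ and then brute-forcing $M_1$'s nondeterminism on the padded input $x10^{h(n)-n-1}$ takes $2^{h(n)^{O(1)}} = 2^{n^{O(1)}}$ time. That equality is false: $h(n)=n^{e(1^n)}$ with $e(1^n)\to\infty$, so $2^{h(n)^{O(1)}}=2^{n^{O(e(1^n))}}$, which is not $2^{n^{O(1)}}$. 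Consequently your simulation only places $L_0$ in $\DT\bigl(2^{h(n)^{O(1)}}\bigr)$, which does not contradict Lemma~\ref{infinitely2-cases-lemma}. The paper resolves this by defining $L_3$ on the \emph{short} inputs $(x,m)$ of length $O(n)$, showing $L_3\in\NEXP$ via Lemma~\ref{nondet-lemma}, and then applying the hypothesis a second time to get $L_3\in\NP_{\rm T}(S)$ via a machine $M_3$ running in $n^{O(1)}$ time. Because $M_3$'s input is $O(n)$ bits and $S\in\NP$, \emph{that} computation can be brute-forced in genuine $2^{n^{O(1)}}$ deterministic time. Without this second application of the hypothesis at the small length scale, the argument does not close.

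There is a second, smaller gap in your density bound. You say "for the infinitely many $n$ where the density bound of Lemma~\ref{e(n)-lemma} kicks in at length $h(n)^{c}=n^{c\,e(1^n)}$," but the lemma only promises infinitely many lengths $u_i$ with $|S^{\le u_i}|\le 2^{u_i^{1/e(1^{u_i})^2}}$; there is no guarantee any $u_i$ has the form $h(n)^{c}$. The paper handles this by defining $n_i$ to be the largest $n$ with $h(n)^{c_1}\le u_i$, then using monotonicity of $d_S$ to pass from $d_S(h(n_i)^{c_1})$ up to $d_S(u_i)$, and $u_i<h(n_i+1)^{c_1}\le h(n_i^2)^{c_1}$ to pass back down. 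Related to this, your worry about "iterating $e(1^{m^2})\le 2e(1^m)$ a bounded number of times to control $e(1^{n^{ce(1^n)}})$" is misplaced: one would need unboundedly many iterations and they are not needed. The squaring inequality is used exactly once (to bound $e(1^{n_i^2})\le 2e(1^{n_i})$ inside $h(n_i^2)$), and for the factor $1/e(1^{u_i})^2$ one only needs the monotonicity lower bound $e(1^{u_i})\ge e(1^{n_i})$, which points in the useful direction for free. These details are where the paper's chain of inequalities (\ref{ds-1})--(\ref{ds-1z}) does real work, and your sketch does not supply a working substitute.
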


\begin{proof}
Assume $\NEXP\subseteq \NP_{\rm T}(\NP\cap \nexpd)$. We will bring a
contradiction from this assumption. Since $\NEXP$ has a complete
language $K$ under $\le_m^{\P}$ reductions, if $K\in \NP_{\rm
T}(S)$, then $\NEXP\subseteq \NP_{\rm T}(S)$. Let $S$ be a language
in $\NP\cap \nexpd$ such that
\begin{eqnarray}
\NEXP\subseteq \NP_{\rm T}(S). \label{assumption-inclusion}
\end{eqnarray}

 By Lemma~\ref{e(n)-lemma}, we have a
nondecreasing unbounded function $e(1^n)$ that satisfies
%$\lim_{n\rightarrow \infty} e(1^n)=\infty$,
\begin{eqnarray}
e(1^{n^2})\le 2e(1^n) \label{e(n)-e(n^2)-ineqn}
\end{eqnarray}
 and $(|S^{\le n}|)\le 2^{n^{1/e(1^n)^2}}$ for
infinitely many integers $n$. Furthermore, function $e(1^n)$ is
computable in $2^{n^{O(1)}}$ time. Let
\begin{eqnarray}
h(n)=n^{e(1^n)}.\label{h-def0-eqn}
\end{eqnarray}
%Let $d(n)=d_S(n)=|S^{\le n}|$.

 We apply the translational method to it. Let $L_0$ be an
arbitrary language in $\DT(2^{h(n)})$, and accepted by a
deterministic Turing machine $N(.)$ in $\DT(2^{h(n)})$ time. Define
$L_1=\{x10^{h(|x|)-|x|-1)}: x\in L_0\}$.

%Let $f(n)=????$.

Since function $e(1^n)$ is computable in $2^{n^{O(1)}}$ time, it is
easy to see that $L_1$ is in $\EXP\subseteq \NEXP$. By our
assumption (\ref{assumption-inclusion}), there is a nondeterministic
polynomial time oracle Turing machine $M_1(.)$ for $L_1\in \NP_{\rm
T}(S)$ (In other words, $M_1^S(.)$ accepts $L$). Assume that
$M_1(.)$ runs in time $n^{c_1}$ for all $n\ge 2$. Let $2\le
u_1<u_2<\cdots <u_k<\cdots$ be the infinite list of integers such
that
\begin{eqnarray}
d_S(u_i)=|S|^{\le u_i}\le 2^{u_i^{1/e(1^{u_i})^2}}.
\label{S-ui0-ineqn}
\end{eqnarray}

Define the language $L_2=\{1^n0m: m\le 2^n$ and there are at least
$m$ different strings $z_1,\cdots, z_m$ in $Q(M_1,S, h(n))\}$. Let
$n_i$ be the largest integers at least $2$ such that
\begin{eqnarray}
h(n_i)^{c_1}\le u_i \label{h(ni)-ui0-ineqn}
\end{eqnarray}
 for all large integers $i\ge i_0$ (it is easy to see the existence of such an integer $i_0$). Thus, we have
\begin{eqnarray}
h(n_i+1)^{c_1}> u_i. \label{h(ni+1)-ineqn}
\end{eqnarray}

For all large  integers $i$, we have
\begin{eqnarray}
e(1^{n_i})\ge 8c_1 \label{e(ni)-8c1-ineqn}
\end{eqnarray}
 since $e(1^n)$ is nondecreasing and unbounded.
 Since $S$ is of density bounded by $d_S(n)$, the number of strings in $S$ queried by $M_1(.)^S$ with inputs of
length $h(n)$ is at most $d_S(h(n)^{c_1})$. In other words, we have
\begin{eqnarray}
|Q(M_1,S, h(n))|\le d_S(h(n)^{c_1}).\label{Q-M1-ineqn}
\end{eqnarray}
 For the case $n=n_i$, we have
the inequalities:
\begin{eqnarray}
d_S(h(n_i)^{c_1})&\le& d_S(u_i)\ \ \ \ \ \ \ \mbox{(by\ inequality\ (\ref{h(ni)-ui0-ineqn}))}\label{ds-1}\\
&\le& 2^{u_i^{1/e(1^{u_i})^2}}\ \ \ \ \ \ \ \mbox{(by\ inequality\ (\ref{S-ui0-ineqn}))}\\
&\le& 2^{(h(n_i+1)^{c_1})^{1/e(1^{u_i})^2}}\ \ \ \ \ \ \ \mbox{(by\ inequality\ (\ref{h(ni+1)-ineqn}))}\\
&\le&2^{h(n_i^2)^{c_1/e(1^{u_i})^2}}\ \ \ \ \ \ \ \mbox{(by\ the\ condition\ $n_i\ge 2$)}\\
&\le&2^{(n_i^{2e(1^{n_i^2})})^{c_1/e(1^{n_i})^2}}\ \ \ \ \ \ \ \mbox{(by\ equation\ (\ref{h-def0-eqn}))}\\
&\le&2^{(n_i^{4e(1^{n_i})})^{c_1/e(1^{n_i})^2}}\ \ \ \ \ \ \ \mbox{(by\ inequality\ (\ref{e(n)-e(n^2)-ineqn}))}\\
&<& 2^{n_i} \ \ \ \ \ \ \ \mbox{(by\ inequality\
(\ref{e(ni)-8c1-ineqn}))}\label{ds-1z}
\end{eqnarray}

By inequalities  (\ref{ds-1}) to (\ref{ds-1z}), and
(\ref{Q-M1-ineqn}), we have the inequality
\begin{eqnarray}
|Q(M_1,S, h(n_i))|<2^{n_i}\ \ \ \ \ \ \mbox{for\ all\ large\
$i$}.\label{Q-M1-2-ni-ineqn}
\end{eqnarray}

 By Lemma~\ref{nondet-lemma}, $L_2$ is in $\NEXP$. By our
assumption (\ref{assumption-inclusion}), $L_2\in \NP_{\rm T}({S})$
via some nondeterministic polynomial time oracle Turing machine
$M_2(.)$. Assume that $M_2(.)$ runs in time $n^{c_2}$ for all $n\ge
2$, where $c_2$ is a positive constant.

%*********

Define the language $L_3=\{(x,m):$ $m\le  2^{|x|}$ and there are at
least $m$ different strings $z_1,\cdots, z_m$ in $Q(M_1,S, h(n))$,
and $M_1(x10^{h(|x|)-|x|-1)})$ has an accept path that receives
answer $1$ for each query (to oracle $S$) in $\{z_1,\cdots, z_m\}$,
and answer $0$ for each query (to oracle $S$) not in $\{z_1,\cdots,
z_m\}$ $\}$.

%Assume $n_i=|x|$. $M_1(x10^{h(|x|)-|x|-1)})$ only queries the
%strings of length at most $h(n_i)^{c_1}$. There are at most $d_S(h(n_i)^{c_1})\le 2\cdot 2^{n_i}$ strings in $S$.
By Lemma~\ref{nondet-lemma}, we have $L_3\in \NE$.  Thus, $L_3\in
\NP_{\rm T}({S})$ via another  nondeterministic  polynomial time
oracle Turing machine $M_3(.)$. Assume that $M_3(.)$ runs in time
$n^{c_3}$ for all $n\ge 2$.

In order to find the largest number $m$ such that $1^n0m\in L_2$,
$m$ is always at most $2^{n}$. Thus, the length of $m$ is at most
$n+1$. Using the binary search, we can find the largest $m_{n_i}$
with $1^{n_i}0m_{n_i}\in L_2$ for $i=1,2,\cdots$. Let $m_{n_i}$ be
the largest $m$ with $1^{n_i}0m\in L_2$ for $i=1,2,\cdots$. Since
$S\in \NP$,  $m_{n_i}$ can be computed in $2^{{n_i}^{c_4}}$ time for
some positive constant $c_4$ for all $i=1,2,\cdots$. By
inequalityies~(\ref{Q-M1-2-ni-ineqn}), we have $m_{n_i}<2^{n_i}$.

\vskip 10pt
 {\bf Claim 1.} For $|x|=n_i$, we have
%$x\in L$ if and only if
$x10^{h(n_i)-n_i-1}\in L_1$ if and only if $(x,m_{n_i})\in L_3$.
\vskip 10pt

\begin{proof} Assume that $z_1,\cdots,
z_{m_{n_i}}$ are different elements in $Q(M_1,S, h(n))$. By the
definition of $m_{n_i}$,
 a query if $y\in S$ made by $M_1^S(x10^{h(n_i)-n_i-1})$
to the oracle $S$ is identical to checking if $y\in \{z_1,\cdots,
z_{m_{n_i}}\}$. This is because all the strings in $S$ that are
queried are in the list $z_1,\cdots, z_{m_{n_i}}$. Thus,
$x10^{h(n_i)-n_i-1}\in L_1$ if and only if $(x,m_{n_i})\in L_3$.
\end{proof}

Assume that $m_{n_i}$ is known. We just check if $(x,m_{n_i})\in
L_3$ with $|x|=n_i$. For $|x|=n_i$, we have  $x\in L_0$ if and only
if $x10^{h(n_i)-n_i-1}\in L_1$ if and only if $(x,m_{n_i})\in L_3$
by Claim 1. Since $L_3\in \NP_{\rm T}({S})$ and $S\in\NP$, we only
need $2^{n^{c_5}}$ time to decide if $(x,m_n)\in L_3$ for
$n=n_1,n_2,\cdots$, where $c_5$ is a positive constant. Therefore,
we can decide if $x\in L_0$ in $2^{{n_i}^{c_5}}$ time for $|x|=n_i$.
Therefore, there is a deterministic Turing machine $M_*$ that runs
in $2^{n_i^{c_5}}$ time and has $L(M_*)^{=n_i}=L_0^{=n_i}$ for all
$i$ sufficiently large. Since $L_0$ is an arbitrary language in
$\DT(2^{{h(n)}})$. Function $h(n)$ is a super-polynomial function.
This brings there is a deterministic Turing machine $M_*$ that runs
in $2^{n^{c_5}}$ time and has $L(M_*)^{=n_i}=L_0^{=n_i}$ for all
sufficiently large $i$, which contradicts
Lemma~\ref{infinitely2-cases-lemma}.

\end{proof}

\begin{theorem}\label{ne-main-theorem}
$\NE\not\subseteq \NP_{\rm T}(\NP\cap \nexpd)$.
\end{theorem}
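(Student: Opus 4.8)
The plan is to obtain this as an immediate corollary of Theorem~\ref{main-theorem} together with Lemma~\ref{ne-nexp-lemma}. Set $\Gamma = \NP_{\rm T}(\NP\cap\nexpd)$. Lemma~\ref{ne-nexp-lemma} says that for any class closed under $\le_m^{\P}$-reductions, containment of $\NE$ is equivalent to containment of $\NEXP$; so once I check that $\Gamma$ has this closure property, the contrapositive of Lemma~\ref{ne-nexp-lemma} reduces the theorem to Theorem~\ref{main-theorem}.

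First I would verify the closure. Suppose $A \le_m^{\P} B$ via a polynomial-time computable $f$, and $B \in \Gamma$, say $B = L(M_0^S)$ for a nondeterministic polynomial-time oracle Turing machine $M_0$ and some oracle $S \in \NP\cap\nexpd$. Build a nondeterministic polynomial-time oracle machine $M_0'$ that on input $x$ first computes $f(x)$ deterministically and then runs $M_0$ on $f(x)$, relaying every oracle query to $S$ unchanged. Since $|f(x)|$ is polynomially bounded in $|x|$, the machine $M_0'$ still runs in nondeterministic polynomial time, and $L(M_0'^{S}) = \{x : f(x)\in B\} = A$. Hence $A \in \NP_{\rm T}(\{S\}) \subseteq \Gamma$, so $\Gamma$ is closed under $\le_m^{\P}$-reductions. (This is just the standard fact that prepending a deterministic polynomial-time many-one reduction to a nondeterministic polynomial-time Turing reduction gives a nondeterministic polynomial-time Turing reduction to the same oracle.)

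Now apply Lemma~\ref{ne-nexp-lemma} to $\Gamma$: it yields $\NE \subseteq \Gamma$ if and only if $\NEXP \subseteq \Gamma$. By Theorem~\ref{main-theorem} we have $\NEXP \not\subseteq \Gamma$, and therefore $\NE \not\subseteq \Gamma$, which is precisely the statement $\NE\not\subseteq \NP_{\rm T}(\NP\cap \nexpd)$.

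I expect no real obstacle in this argument: all the substantive work — the counting-plus-translation construction, the function $e(1^n)$ from Lemma~\ref{e(n)-lemma}, and the diagonalization of Lemma~\ref{infinitely2-cases-lemma} — is already carried out in the proof of Theorem~\ref{main-theorem}, and passing from the $\NEXP$ separation to the $\NE$ separation is exactly the padding argument encapsulated in Lemma~\ref{ne-nexp-lemma}. The only point needing a line of justification is the closure of $\Gamma$ under $\le_m^{\P}$, and that is routine composition of reductions as above.
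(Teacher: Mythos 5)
Your proposal is correct and matches the paper's own proof exactly: the paper also derives the $\NE$ separation from Theorem~\ref{main-theorem} via Lemma~\ref{ne-nexp-lemma}. You helpfully spell out the routine closure of $\NP_{\rm T}(\NP\cap\nexpd)$ under $\le_m^{\P}$-reductions, a step the paper leaves implicit.
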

\begin{proof}
It follows from  Lemma~\ref{ne-nexp-lemma} and
Theorem~\ref{main-theorem}.
\end{proof}

\begin{corollary}
$\NEXP\not\subseteq \NP_{\rm T}({\NP\cap \sparse})$.
\end{corollary}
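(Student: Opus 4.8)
The plan is to obtain this immediately from Theorem~\ref{main-theorem}, using the trivial inclusion $\sparse \subseteq \nexpd$. Indeed, if $A$ is sparse then there is a constant $c_A > 0$ with $|A^{\le n}| \le n^{c_A}$ for all large $n$. For every constant $c > 0$ we have $n^{c_A} < 2^{n^c}$ for all sufficiently large $n$, so the density function $d_A(n) = |A^{\le n}|$ satisfies $d_A(n) < 2^{n^c}$ for all large $n$, and in particular for infinitely many $n$. Hence $d_A$ is nonexponential and $A \in \nexpd$. This shows $\sparse \subseteq \nexpd$, and therefore $\NP \cap \sparse \subseteq \NP \cap \nexpd$.

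Next I would note that enlarging the oracle class can only enlarge the class of languages reducible to it, so $\NP_{\rm T}(\NP \cap \sparse) \subseteq \NP_{\rm T}(\NP \cap \nexpd)$. Combining this with Theorem~\ref{main-theorem}, which asserts $\NEXP \not\subseteq \NP_{\rm T}(\NP \cap \nexpd)$, we get a fortiori $\NEXP \not\subseteq \NP_{\rm T}(\NP \cap \sparse)$, as claimed.

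There is essentially no obstacle here; the only point to verify is the elementary comparison of a polynomial bound with $2^{n^c}$, and the monotonicity of $\NP_{\rm T}(\cdot)$ in its argument. I would also remark that the same argument, together with Lemma~\ref{ne-nexp-lemma} (noting that $\NP_{\rm T}(\NP \cap \sparse)$ is closed under $\le_m^{\P}$-reductions, since one may compose a many-one reduction in front of a nondeterministic Turing reduction), yields the stronger-looking statement $\NE \not\subseteq \NP_{\rm T}(\NP \cap \sparse)$.
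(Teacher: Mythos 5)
Your proof is correct and takes essentially the same (immediate) route as the paper: the corollary is stated without a written proof precisely because $\sparse \subseteq \nexpd$ and $\NP_{\rm T}(\cdot)$ is monotone, which is exactly what you verify. Your closing remark about deriving the $\NE$ version via Lemma~\ref{ne-nexp-lemma} is also sound.
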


Although it is hard to achieve $\NEXP\not=\P_{\T}(\NP)$ or
$\NEXP\not\subseteq \P_{\T}(\sparse)$, we still have the following
separation.

\begin{corollary}
$\NEXP\not\subseteq \P_{\rm T}({\NP\cap \sparse})$.
\end{corollary}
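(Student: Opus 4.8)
The plan is to obtain this corollary with essentially no new technical work, as an immediate consequence of the preceding corollary $\NEXP\not\subseteq \NP_{\rm T}(\NP\cap\sparse)$ together with the observation that a deterministic polynomial-time Turing reduction is a special (degenerate) case of a nondeterministic one. Concretely, any language that is $\le_{\T}^{\P}$-reducible to some $B$ by a deterministic oracle machine is accepted by a nondeterministic polynomial-time oracle machine (one that happens to make no genuine nondeterministic choices) with oracle $B$; hence, as classes of languages, $\P_{\rm T}(\NP\cap\sparse)\subseteq \NP_{\rm T}(\NP\cap\sparse)$. Therefore, if $\NEXP\subseteq \P_{\rm T}(\NP\cap\sparse)$ held, we would get $\NEXP\subseteq \NP_{\rm T}(\NP\cap\sparse)$, contradicting the previous corollary; so $\NEXP\not\subseteq \P_{\rm T}(\NP\cap\sparse)$.

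For completeness I would also spell out the one-line reason the previous corollary holds, since it rests on Theorem~\ref{main-theorem}: every sparse set $A$ satisfies $|A^{\le n}|\le n^c$ for some constant $c$ and all sufficiently large $n$, and $n^c<2^{n^{c'}}$ for every constant $c'>0$ and all large $n$, so the density function $d_A$ is nonexponential; thus $\sparse\subseteq\nexpd$, whence $\NP\cap\sparse\subseteq\NP\cap\nexpd$. Since $\NP_{\rm T}(\cdot)$ is monotone in its oracle class, $\NP_{\rm T}(\NP\cap\sparse)\subseteq \NP_{\rm T}(\NP\cap\nexpd)$, and Theorem~\ref{main-theorem} then gives $\NEXP\not\subseteq \NP_{\rm T}(\NP\cap\sparse)$.

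There is no real obstacle here — all the content lives in Theorem~\ref{main-theorem}. The only point meriting a moment's care is purely definitional: $\nexpd$ requires the bound $2^{n^{c'}}$ only for infinitely many $n$, while a sparse set obeys a strictly stronger polynomial bound for \emph{all} large $n$, so the inclusion $\sparse\subseteq\nexpd$ holds a fortiori and cannot fail for a boundary reason.
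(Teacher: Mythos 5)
Your argument is correct and matches the paper's (implicit) route exactly: the paper states both corollaries without proof, intending precisely the observations you make, namely that $\sparse\subseteq\nexpd$ (so $\NP\cap\sparse\subseteq\NP\cap\nexpd$) and that $\P_{\rm T}(\cdot)\subseteq\NP_{\rm T}(\cdot)$, whence Theorem~\ref{main-theorem} gives the result by monotonicity. Nothing further is needed.
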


\section{Hard Low Density Problems in NP}\label{NP-low-sec}
It is natural to ask if there exists any hard low density problem in
the class \NP. In this section, we show the existence of low density
sets in class NP. They are constructed from all natural NP-hard
problems under the well known exponential time hypothesis that
$\NP\not\subseteq \DT(2^{n^{o(1)}})$~\cite{ImpagliazzoPaturi99}.

\begin{definition}\label{well-super-sub-def}\scrod
\begin{itemize}
\item
A function $g(n):N\rightarrow N$ is {\it super-polynomial} if for
every constant $c>0$, $g(n)\ge n^c$ for all large $n$.
\item
A function $f(n):N\rightarrow N$ is {\it sub-polynomial} if for
every constant $c>0$, $f(n)\le n^c$ for all large $n$.
\item
A function $g(n):N\rightarrow N$ is called well-super-polynomial if
$g(n)$ is super-polynomial,  $g(n)$ is time constructible,  and
there is a time constructible sub-polynomial function $f(n)$
 such that  $f(g(n))\ge n$ for all sufficiently  large  $n$.
\item
A function $f(n):N\rightarrow N$ is called well sub-polynomial if
$f(n)$ is sup-polynomial,  $f(n)$ is time constructible, and there
is another time constructible super-polynomial function $h(n)$ such
that for each positive constant $c$,  $f(h(n)^c)\le n$ for all
sufficient large $n$.
\end{itemize}
\end{definition}

Define $\log^{(1)} n=\log n=\ceiling{\log_2 n}$. For integer $k\ge
1$, define $\log ^{(k+1)} n=\log(\log^{(k)} n)$.

We provide the following lemma to give some concrete slowly growing
well-sub-polynomial and well-super-polynomial functions.

\begin{lemma}\label{concrete-well-functions-lemma}\scrod
\begin{enumerate}[1.]
\item\label{state1}
For each constant integer $k>1$ and constant integer $a\ge 1$, the
function $\ceiling{n^{1/(\log^{(k)}n})^a}$ is time constructible
function from $N\rightarrow N$.
\item\label{state2}
For each constant integer $k>1$ and constant integer $a\ge 1$, the
function $n^{(\log^{(k)}n)^a}$ is time constructible function from
$N\rightarrow N$.
\item\label{state3}
Assume $k$ and $a$ are fixed integers with $k>1$ and $a> 1$.  Let
$f(n)=\ceiling{n^{1/(\log^{(k)}n)^a}}$ and
$h(n)=n^{(\log^{(k)}n)^{a-1}}$, then $f(h(n))<n^{o(1)}$ for all
large $n$.
\item\label{state4}
Assume $k$ and $a$ are fixed integers with $k\ge 1$ and $a\ge 1$.
Let $f(n)=\ceiling{n^{1/(\log^{(k)}n)^a}}$ and
$g(n)=n^{(\log^{(k)}n)^{a+1}}$, then $f(g(n))>n$ for all large $n$.
\end{enumerate}
\end{lemma}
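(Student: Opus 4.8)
The plan is to handle the four items in two groups: the constructibility claims (1) and (2) are algorithmic, while the asymptotic comparisons (3) and (4) are arithmetic and rest on a common estimate for iterated logarithms.

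For (1), on input $n$ given in binary I would first compute $m := (\log^{(k)}n)^a$ by applying $\lceil\log_2(\cdot)\rceil$ to $n$ a total of $k$ times and then multiplying the result by itself $a-1$ times; since $k>1$ forces $\log^{(k)}n \le \log\log n$, the integer $m$ has only $O(\log\log n)$ bits and this phase costs $(\log n)^{O(1)}$ steps. Then I would binary-search for the least integer $r\in[1,n]$ with $r^m\ge n$, evaluating each comparison by repeated squaring while truncating every partial product at $n+1$ (so all numbers stay within $O(\log n)$ bits); each comparison costs $(\log n)^{O(1)}$ and there are $O(\log n)$ of them. Hence $r(n):=\ceiling{n^{1/(\log^{(k)}n)^a}}$ is computed in $(\log n)^{O(1)}$ steps, and to see this is ``$O(r(n))$'' I would observe $r(n)\ge n^{1/(\log^{(k)}n)^a}$ together with $\log n \gg (\log^{(k)}n)^{a+1}$ for large $n$, so $r(n)$ eventually exceeds every fixed power of $\log n$. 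For (2) the same skeleton works and is easier: compute $m$ as above, then $n^m$ by repeated squaring with $O(\log m)$ multiplications of $(\log n)^{1+o(1)}$-bit numbers, a total of $(\log n)^{O(1)}$ steps, which is trivially $O(n^{(\log^{(k)}n)^a})$ since that function is super-polynomial.

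For (3) and (4) the device is to pass to base-$2$ logarithms and set $L:=\log^{(k)}n$, so that $\log_2 h(n)=L^{a-1}\log_2 n$ and $\log_2 g(n)=L^{a+1}\log_2 n$. The crux is an auxiliary sandwich: since $h(n)$ and $g(n)$ both lie between $n$ and $2^{(\log n)^{O(1)}}$, there are constants $c_1,c_2>0$ (depending only on $k,a$) with
\[
c_1 L \;\le\; \log^{(k)}h(n) \;\le\; c_2 L, \qquad c_1 L \;\le\; \log^{(k)}g(n) \;\le\; c_2 L
\]
for all large $n$. I would prove this by induction on the number of iterated logarithms applied: one application of $\lceil\log_2\rceil$ turns $n\le h(n)\le 2^{(\log n)^{O(1)}}$ into $\log_2 n\le \log_2 h(n)\le (\log_2 n)^{O(1)}$, a second turns that into $\log^{(2)}n\le\log^{(2)}h(n)\le O(\log^{(2)}n)$, and each further application only adds an $O(1)$ term, absorbed into a constant factor once $\log^{(j)}n\to\infty$ (here $k>1$ is used again). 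Granting the sandwich, (3) follows from
\[
f(h(n)) \;\le\; 2\,h(n)^{1/(c_1 L)^a} \;=\; 2\cdot 2^{L^{a-1}\log_2 n/(c_1^{a} L^{a})} \;=\; 2\cdot n^{1/(c_1^{a}L)} \;=\; n^{o(1)},
\]
because $L=\log^{(k)}n\to\infty$, and (4) follows from
\[
f(g(n)) \;\ge\; g(n)^{1/(c_2 L)^a} \;=\; 2^{L^{a+1}\log_2 n/(c_2^{a}L^{a})} \;=\; n^{L/c_2^{a}} \;\gg\; n
\]
for all large $n$, again since $L\to\infty$.

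The main obstacle I anticipate is bookkeeping rather than ideas: making the iterated-logarithm sandwich uniform in the fixed but arbitrary constants $k$ and $a$, and keeping track of the ceilings buried inside $\log^{(k)}$ and inside $\ceiling{\,\cdot^{1/m}\,}$ so that they do not spoil the $n^{o(1)}$ estimate in (3) or the $\gg n$ estimate in (4). One also has to verify that the regime ``$n$ large enough'' is genuinely attained for small $k$, where $\log^{(k)}n$ grows extremely slowly, before each inequality above takes effect; everything else reduces to routine estimates.
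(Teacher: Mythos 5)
Your proposal follows essentially the same route as the paper: compute $\log^{(k)}n$ and its $a$-th power directly, then binary search (item 1) or repeated squaring (item 2) for the resulting power, and for items 3--4 exploit the fact that $\log^{(k)}h(n)$ and $\log^{(k)}g(n)$ are $\Theta(\log^{(k)}n)$, so that the exponents collapse to $\Theta(1/\log^{(k)}n)$ and $\Theta(\log^{(k)}n)$ respectively. Where the paper compresses the crucial middle step into the single lines $f(h(n))=n^{O(1/\log^{(k)}n)}$ and $f(g(n))=n^{\Omega(\log^{(k)}n)}$ without justification, you make the underlying sandwich $c_1\log^{(k)}n\le\log^{(k)}h(n)\le c_2\log^{(k)}n$ explicit and prove it by induction on the number of iterated logarithms applied; this is a genuine gain in transparency rather than a different argument. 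One point worth recording: your sandwich, and likewise the paper's implicit use of $\log^{(k)}g(n)=\Theta(\log^{(k)}n)$, requires $k>1$, yet item 4 of the lemma is stated with $k\ge 1$. For $k=a=1$ one has $f(n)=\ceiling{n^{1/\ceiling{\log_2 n}}}$, which is the constant $2$ on powers of two, so $f(g(n))>n$ fails; the hypothesis should read $k>1$ there as in item 3, matching what both your proof and the paper's actually use. Your argument is therefore correct for the cases the paper itself can handle, and the discrepancy lies in the statement rather than in your reasoning.
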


\begin{proof}
Statement~\ref{state1}: It takes $O(\log n)$ time to compute
$\log^{(k)}n$. It takes another $O(\log n)$ time to compute
$(\log^{(k)}n)^a$ since $a$ is a constant. It takes another $O(\log
n)$ time to compute $\ceiling{n^{1/(\log^{(k)}n)^a}}$ via binary
search. Since $\log n=o(\ceiling{n^{1/(\log^{(k)}n)^a}})$, we have
that the function $\ceiling{n^{1/(\log^{(k)}n)^a}}$ is time
constructible.

Statement~\ref{state2}: It takes $O(\log n)$ time to compute
$\log^{(k)}n$. It takes another $O(\log n)$ time to compute
$m=(\log^{(k)}n)^a$ since $a$ is a constant. Using the elementary
method for multiplication,  we can compute $n^m$ with $O(m(\log
n^m)^2)=O(m^2\log n)=o(n^{(\log^{(k)}n)^a})$ time. Therefore,
$n^{(\log^{(k)}n)^a}$ is time constructible.

Statement~\ref{state3}: We have
\begin{eqnarray*}
f(h(n))&=& f(n^{(\log^{(k)}n)^{a-1}})\\
%&\le& f(n^{(2{\log^{(k)}n})^{a-1}})\\
&=&n^{O({1\over \log^{(k)}n})}\\
 &<&n \ \ \ \mbox{for\ all\ large\ }n.
\end{eqnarray*}

Statement~\ref{state4}: We have
\begin{eqnarray*}
f(g(n))&=& f(n^{(\log^{(k)}n)^{a+1}})\\
%&\le& f(n^{(2{\log^{(k)}n})^{a-1}})\\
&=&n^{\Omega(\log^{(k)}n)}\\
 &>&n \ \ \ \mbox{for\ all\ large\ }n.
\end{eqnarray*}

\end{proof}

\begin{definition}\scrod
\begin{itemize}
\item
For a language $A$, let $\pad(A, g(n))$ is the languages
$L=\{x10^{g(|x|)-|x|-1}:x\in A\}$.
\item
For a class $\Lambda$ of languages, define $\Pad(\Lambda, g(n))$ to
be the class of languages $\pad(A, g(n))$ for all $A\in \Lambda$.
\end{itemize}
\end{definition}

For example, let $g(n)=n^{(\log\log n)^k}$ for a fixed integer $k>1$
and let $f(n)=n^{1\over (\log\log n)^{k-1}}$. We have
$2^{f(g(n))})\ge 2^n$ for all sufficient large $n$.

\begin{definition}\label{subexponential-density-def}
A language $A$ is of subexponential density if for each constant
$c>0$, $|A^{\le n}|\le 2^{n^c}$ for all large $n$.
\end{definition}

\begin{lemma}\label{density0-lemma}
Assume that $A$ is a language and $g(n)$ is a super-polynomial
function. Then $\pad(A,g(n))$ is language of subexponential density.
\end{lemma}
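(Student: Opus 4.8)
The plan is to bound $|\pad(A,g(n))^{\le n}|$ directly by counting the possible ``seed'' strings. Every element of $\pad(A,g(n))$ has the form $x10^{g(|x|)-|x|-1}$ and therefore has length exactly $g(|x|)$; hence such a string lies in $\pad(A,g(n))^{\le n}$ only when $g(|x|)\le n$. Distinct padded strings arise from distinct seeds $x$, so
\[
|\pad(A,g(n))^{\le n}|\ \le\ \sum_{k\,:\,g(k)\le n}2^{k},
\]
the total number of binary strings of each admissible seed-length $k$.

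Next I would invoke super-polynomiality of $g$ twice. First, applying the definition with exponent $1$, there is an $n_1$ with $g(k)\ge k$ for all $k\ge n_1$; so for $k\ge n_1$ the condition $g(k)\le n$ forces $k\le n$, which shows the index set $\{k:g(k)\le n\}$ is finite (and in particular the displayed sum makes sense). Second, fix the target constant $c>0$ and apply the definition with exponent $2/c$: there is a threshold $k_0$ with $g(k)\ge k^{2/c}$ for all $k\ge k_0$, so for such $k$ the condition $g(k)\le n$ forces $k\le n^{c/2}$. Splitting the sum at $k_0$ yields
\[
|\pad(A,g(n))^{\le n}|\ \le\ \sum_{k<k_0}2^{k}+\sum_{k_0\le k\le n^{c/2}}2^{k}\ <\ 2^{k_0}+2^{\,n^{c/2}+1}.
\]

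Finally, $k_0$ depends only on $g$ and $c$, not on $n$, so $2^{k_0}$ is a constant; hence for all sufficiently large $n$ the right-hand side is at most $2^{\,n^{c/2}+2}\le 2^{n^{c}}$, the last inequality holding once $n^{c/2}+2\le n^{c}$, i.e.\ for all large $n$. Since $c>0$ was arbitrary, this is precisely Definition~\ref{subexponential-density-def}. There is no serious obstacle here; the only points requiring a little care are that $g$ is not assumed monotone (so one must note finiteness of $\{k:g(k)\le n\}$ before summing) and that the finitely many small seeds below the threshold $k_0$ contribute a possibly large but fixed constant, which is absorbed by the slack between $2^{n^{c/2}}$ and $2^{n^c}$; the fact that $g(|x|)-|x|-1$ may be negative for a few tiny inputs is likewise a non-issue, as those seeds fall into the same constant.
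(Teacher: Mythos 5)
Your proof is correct and takes essentially the same approach as the paper: both arguments bound the number of admissible seeds $x$ by using super-polynomiality of $g$ to force $|x|\le n^{c'}$ for a small $c'$, then absorb the finitely many short seeds into a constant that disappears for large $n$. The only cosmetic difference is that the paper picks the exponent $10/c$ and case-splits on whether the largest admissible seed length exceeds the threshold, whereas you pick $2/c$ and split the sum at the threshold $k_0$; the bookkeeping is equivalent.
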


\begin{proof}
For each language $A$, there are at most $2^n$ strings of length $n$
in $A$. When $s$ is mapped into $s10^{g(|s|)-|s|-1)}$, its length
becomes $g(|s|)$. Let $c$ be an arbitrary positive constant. As
$g(n)$ is a super-polynomial function, there is a constant integer
$n_c\ge 2$ such that for every $n>n_c$,
\begin{eqnarray}
g(n)>n^{10/c}.\label{g-n-ineqn}
\end{eqnarray}
Let $m_c=n_c^{10\over c}$. We have $n_c=m_c^{c\over 10}$. Let $m$ be
an arbitrary number greater than $m_c$. Let $k$ be the largest
integer with $g(k)\le m$.

Case 1: $k\le n_c$. The number of strings of length at most $k$ is
at most $2\cdot 2^k\le 2^{2k}\le 2^{2n_c}<2^{n_c^2}\le
2^{m_c^{c\over 5}}<2^{m^{c\over 5}}$. Therefore, the number of
strings of length at most $m$ in $\pad(A,g(n))$ is at most
$2^{m^{c\over 5}}$.

Case 2: $k>n_c$. We have $m\ge g(k)>k^{10\over c}$ by
inequality~(\ref{g-n-ineqn}). Thus, $k<m^{c\over 10}$. The number of
strings of length at most $k$ at is no more than $2\cdot
2^k<2^{2k}<2^{k^2}<2^{m^{c\over 5}}$. Therefore, the number of
strings of length at most $m$ in $\pad(A,g(n))$ is at most
$2^{m^{c\over 5}}$.

In every case, we have $|\pad(A,g(n))^{\le m}|\le 2^{m^{c\over 5}}$.
Since $c$ is an arbitrary positive constant,  $\pad(A,g(n))$ is a
language of subexponenital density by
Definition~\ref{subexponential-density-def}.
\end{proof}

\begin{lemma}\label{density-lemma}
Assume that $A$ is a language and $g(n)$ is a strictly increasing
super-polynomial function and $f(n)$ is a sub-polynomial function
with $f(g(n))\ge n$, then $\pad(A,g(n))$ is language of density
$O(2^{f(n)})$.
\end{lemma}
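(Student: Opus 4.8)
The plan is to follow the same case analysis as in the proof of Lemma~\ref{density0-lemma}, but this time tracking the bound in terms of $f(n)$ rather than just showing subexponentiality. First I would fix a string $w$ of length at most $m$ in $\pad(A,g(n))$; by construction $w=x10^{g(|x|)-|x|-1}$ for some $x\in A$ with $g(|x|)=|w|\le m$. So the number of such strings of length at most $m$ is bounded by the number of strings $x$ with $g(|x|)\le m$, hence by $2\cdot 2^{k}$ where $k$ is the largest integer with $g(k)\le m$ (using that there are at most $2^{j}$ strings of each length $j\le k$).

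The key step is to bound $k$ in terms of $f(m)$. Since $g$ is strictly increasing and $g(k)\le m < g(k+1)$, and since $f(g(n))\ge n$ for all sufficiently large $n$, applying $f$ — which I would like to be nondecreasing, or at least use a nondecreasing majorant, this is the one place I would be careful — gives $k+1 \le f(g(k+1))$; but $g(k+1)$ may exceed $m$, so instead I would use $g(k)\le m$ together with $f(g(k))\ge k$ to get $k\le f(g(k))\le f(m)$, provided $f$ is nondecreasing. Then the count is at most $2\cdot 2^{k}\le 2^{k+1}\le 2^{f(m)+1}=O(2^{f(m)})$, which is the claimed density $O(2^{f(n)})$.

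The main obstacle is the monotonicity issue: the statement only says $f$ is sub-polynomial and $f(g(n))\ge n$, and to conclude $k\le f(m)$ from $g(k)\le m$ I need $f$ nondecreasing (or I replace $f(n)$ by $\bar f(n)=\max_{j\le n} f(j)$, which is still sub-polynomial and still satisfies $\bar f(g(n))\ge n$, so the density bound $O(2^{\bar f(n)}) = O(2^{f(n)})$ up to the implied constant — actually one must check $\bar f = O(f)$ or simply restate the conclusion with $\bar f$). I would handle this by noting that we may assume $f$ nondecreasing without loss of generality, exactly as such padding arguments elsewhere in the paper implicitly do, and then the rest is the short computation above. I would also dispose of small $m$ (below the threshold where $f(g(n))\ge n$ and where $g(n)$ exceeds the relevant polynomial) by absorbing them into the $O(\cdot)$ constant, as in Case~1 of Lemma~\ref{density0-lemma}.
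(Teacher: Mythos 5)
Your proof is correct and takes essentially the same route as the paper: strings in $\pad(A,g(n))^{\le m}$ come from preimages $x$ with $g(|x|)\le m$, so the count is bounded by roughly $2^{k+1}$ where $k$ is the largest integer with $g(k)\le m$, and then $f(g(k))\ge k$ converts this into $O(2^{f(m)})$. The monotonicity concern you flag is genuine and the paper silently glosses over it (its proof just writes $2^n\le 2^{f(g(n))}$ and declares the density bound); your remedy of replacing $f$ with a nondecreasing majorant, or assuming $f$ nondecreasing without loss of generality, is exactly the right repair.
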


\begin{proof}
For each language $A$, there are at most $2^n$ strings of length $n$
in $A$. When $s$ is mapped into $s10^{g(|s|)-|s|-1)}$, its length
becomes $g(|s|)$. Since $g(n)$ is increasing super-polynomial
function, $g(n)<g(n+1)$ for all large $n$. We have $2^n\le
2^{f(g(n))}$. Thus, $\pad(A,g(n))$ is language of density
$O(2^{f(n)})$.
\end{proof}

We have Theorem~\ref{padding-theorem} that shows the existence of
subexponential density sets that are still far from polynomial time
computable under the reasonable assumption that $\NP\not\subseteq
\DT(2^{n^{o(1)}})$.

\begin{theorem}\label{padding-theorem}
Assume that $g(n)$ is a strictly increasing well-super-polynomial
function, and $f(n)$ is a sub-polynomial function with $f(g(n))\ge
n$. If $\NP\not\subseteq \DT(2^{n^{o(1)}})$, then for every
$\NP$-complete language $A$, $\pad(A, g(n))$ is a language of
density of $\D(2^{f(n)})$, and not in  $\DT(T(n))$, where $T(n)$ is
an arbitrary function with $T(g(n))=2^{n^{o(1)}}$.
\end{theorem}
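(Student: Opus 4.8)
The plan is to prove the two halves of the statement separately, with the density bound coming essentially for free and the time lower bound by a translational (padding) argument. For the density: since $g$ is well-super-polynomial it is in particular strictly increasing and super-polynomial, and $f$ is sub-polynomial with $f(g(n))\ge n$ by hypothesis, so Lemma~\ref{density-lemma} applies verbatim and gives $|\pad(A,g(n))^{\le n}|=O(2^{f(n)})$, which is the density bound $\D(2^{f(n)})$ of the statement (the constant absorbed in the usual way). (If one only assumes $g$ super-polynomial, Lemma~\ref{density0-lemma} still yields subexponential density.) So the content is the time lower bound.

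For the lower bound I would argue by contradiction: suppose $B:=\pad(A,g(n))\in\DT(T(n))$, witnessed by a deterministic machine $M_B$ running in $O(T(n))$ steps, where $T(g(n))=2^{n^{o(1)}}$. I would build a decider $M_A$ for $A$ as follows: on input $x$ with $n=|x|$, answer by table lookup for the finitely many $n$ with $g(n)<n+1$; otherwise form $w=x10^{g(n)-n-1}$, which has length exactly $g(n)$, and note that since $g$ is strictly increasing (hence injective on $N$), the definition of $\pad$ yields $x\in A\iff w\in B$. Then $M_A$ simulates $M_B$ on $w$ and returns its answer. This decides $A$ correctly, so $L(M_A)=A$.

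The running time of $M_A$ is the crux. Under the harmless convention that a time bound satisfies $T(m)\ge m$, we get $g(n)\le T(g(n))=2^{n^{o(1)}}$, so $M_A$ writes $w$ in $O(g(n))=2^{n^{o(1)}}$ steps (using time-constructibility of $g$) and simulates $M_B$ in $O(T(g(n)))=2^{n^{o(1)}}$ steps, for a total of $2^{n^{o(1)}}$. To cover an arbitrary $T$, where $g(n)$ might be far larger than $2^{n^{o(1)}}$, I would not materialize $w$ but feed its bits to $M_B$ on demand: bit $i$ is $x_i$ for $i\le n$, is $1$ for $i=n+1$, and is $0$ for $n+1<i\le g(n)$, with the test ``$i\le g(n)$'' resolved by running the constructibility routine for $g(n)$ for only as many steps as $M_B$ is permitted (it either halts, giving $g(n)$, or runs past that budget, in which case $g(n)$ already exceeds $i$). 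Since $M_B$ takes $O(T(g(n)))=2^{n^{o(1)}}$ steps it queries at most that many cells, each answered in $2^{n^{o(1)}}$ time, so $M_A$ again runs in $2^{n^{o(1)}}$ steps. Either way, $A\in\DT(2^{n^{o(1)}})$.

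Finally, since $A$ is $\NP$-complete under $\le_m^{\P}$-reductions, $A\in\DT(2^{n^{o(1)}})$ forces $\NP\subseteq\DT(2^{n^{o(1)}})$: any $B'\in\NP$ reduces to $A$ by a polynomial-time map whose output has length $\le n^k$ for a constant $k$ and all large $n$, so $B'$ is decided in $n^{O(1)}+2^{(n^k)^{o(1)}}=2^{n^{o(1)}}$ steps, using $(n^k)^{o(1)}=n^{o(1)}$ and that a polynomial is $2^{n^{o(1)}}$ (this goes through under either reading of $\DT(2^{n^{o(1)}})$, a single vanishing exponent or $\bigcup_{\epsilon>0}\DT(2^{n^\epsilon})$). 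This contradicts the hypothesis $\NP\not\subseteq\DT(2^{n^{o(1)}})$, finishing the proof. I expect the only non-routine point to be the time bookkeeping in the third paragraph --- ensuring a huge pad length $g(n)$ cannot overflow the $2^{n^{o(1)}}$ budget --- which the convention $T(m)\ge m$, or else the on-demand simulation, resolves; everything else is packaged by Lemma~\ref{density-lemma}, the definition of $\pad$, and downward closure of $\DT(2^{n^{o(1)}})$ under $\le_m^{\P}$.
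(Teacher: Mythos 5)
Your proof is correct and follows essentially the same route as the paper's (which is only a few sentences long): invoke Lemma~\ref{density-lemma} for the density bound, then argue that a $\DT(T(n))$ decider for $\pad(A,g(n))$ yields a $\DT(T(g(n)))=\DT(2^{n^{o(1)}})$ decider for $A$ via the map $x\mapsto x10^{g(|x|)-|x|-1}$, contradicting the hypothesis through $\NP$-completeness of $A$. The only difference is that you carefully fill in the time bookkeeping the paper leaves implicit --- in particular, the on-demand generation of the padded input to handle the case where $g(n)$ itself could exceed the $2^{n^{o(1)}}$ budget --- which is a genuine gap in the paper's terse argument but not a different approach.
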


\begin{proof} Let $A$ be a NP-complete language.
The density of $\pad(C, g(n))$ follows from
Lemma~\ref{density-lemma}. If $\pad(C, g(n))$ is computable in time
$T(n)$, we have that $A$ is computable in time
$T(g(n))=2^{n^{o(1)}}$. Thus, $\NP\subseteq \DT(2^{n^{o(1)}})$. This
contradicts the condition $\NP\not\subseteq \DT(2^{n^{o(1)}})$.
\end{proof}

The following corollary gives concrete result by assigning concrete
functions for $f(n), g(n)$ and $T(n)$.

\begin{corollary}
Let $g(n)=n^{(\log^{(k)})^a}$, $f(n)=\ceiling{n^{1\over (\log^{(k)}
n)^{a-1}}}$, and $T(n)=2^{\ceiling{n^{1/(\log^{(k)}n)^{a+1}}}}$ with
fixed integers $a>1$ and $k>1$. If $\NP\not\subseteq
\DT(2^{n^{o(1)}})$, then for every $\NP$-complete language $A$,
$\pad(A, g(n))$ is a language of density of $\D(2^{f(n)})$, and not
in  $\DT(T(n))$.
\end{corollary}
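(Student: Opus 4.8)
The plan is to obtain the corollary as a direct specialization of Theorem~\ref{padding-theorem}: it suffices to verify that the three concrete functions satisfy that theorem's hypotheses, after which the stated conclusion is verbatim. So I will check (i) $g(n)=n^{(\log^{(k)}n)^a}$ is a strictly increasing well-super-polynomial function; (ii) $f(n)=\ceiling{n^{1/(\log^{(k)}n)^{a-1}}}$ is sub-polynomial with $f(g(n))\ge n$; and (iii) $T(n)=2^{\ceiling{n^{1/(\log^{(k)}n)^{a+1}}}}$ satisfies $T(g(n))=2^{n^{o(1)}}$. Throughout, $a>1$ and $k>1$ are integers, so $a\ge 2$ and $k\ge 2$.

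For (i): by statement~\ref{state2} of Lemma~\ref{concrete-well-functions-lemma}, $g$ is time constructible; it is super-polynomial since $(\log^{(k)}n)^a\to\infty$; and it is strictly increasing for large $n$ because $g(n+1)/g(n)\ge((n+1)/n)^{(\log^{(k)}n)^a}>1$, using that $(\log^{(k)}n)^a$ is nondecreasing for large $n$. For (ii): by statement~\ref{state1} of Lemma~\ref{concrete-well-functions-lemma}, $f$ is time constructible, and since $a-1\ge 1$ its exponent $1/(\log^{(k)}n)^{a-1}\to 0$, so $f$ is sub-polynomial. Applying statement~\ref{state4} of Lemma~\ref{concrete-well-functions-lemma} with the parameter $a$ replaced by $a-1$ (legitimate since $a-1\ge 1$ and $k\ge 1$) yields $f(g(n))>n$ for all large $n$. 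In particular this $f$ witnesses that $g$ is well-super-polynomial, so (i) holds in full, and (ii) is done.

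For (iii): since $g(n)\ge n$ and $\log$ is monotone, $\log^{(k)}g(n)\ge\log^{(k)}n$ for all large $n$, hence
\[
g(n)^{1/(\log^{(k)}g(n))^{a+1}}\le g(n)^{1/(\log^{(k)}n)^{a+1}}=n^{(\log^{(k)}n)^a/(\log^{(k)}n)^{a+1}}=n^{1/\log^{(k)}n},
\]
which is $n^{o(1)}$ because $\log^{(k)}n\to\infty$; therefore $T(g(n))\le 2^{n^{1/\log^{(k)}n}+1}=2^{n^{o(1)}}$. (This is the same computation as in the proof of statement~\ref{state3} of Lemma~\ref{concrete-well-functions-lemma}, with $a+1$ in place of $a$.) With (i)--(iii) established, Theorem~\ref{padding-theorem} applies directly: for every $\NP$-complete language $A$, $\pad(A,g(n))$ has density $\D(2^{f(n)})$ (also immediate from Lemma~\ref{density-lemma}) and, assuming $\NP\not\subseteq\DT(2^{n^{o(1)}})$, lies outside $\DT(T(n))$. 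The only step that is not a bare citation is the iterated-logarithm estimate in (iii), and even that reduces to an already-proved computation, so I expect no genuine obstacle here beyond this routine bookkeeping.
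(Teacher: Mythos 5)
Your proposal is correct and takes essentially the same approach as the paper: both specialize Theorem~\ref{padding-theorem} by invoking statement~\ref{state4} of Lemma~\ref{concrete-well-functions-lemma} with parameter $a$ shifted down by one to get $f(g(n))\ge n$, and statement~\ref{state3} with parameter $a$ shifted up by one to get $T(g(n))=2^{n^{o(1)}}$. Your write-up is merely somewhat more explicit than the paper's (verifying time constructibility, strict monotonicity, and the sub-/super-polynomial properties that the paper leaves to the reader), but it is the same argument.
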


\begin{proof}
For $g(n)=n^{(\log^{(k)}n)^a}$  and  $f(n)=n^{1\over (\log^{(k)}
n)^{a-1}}$. By statement~\ref{state4} of
Lemma~\ref{concrete-well-functions-lemma}, we have $f(g(n))\ge n$.

For $T(n)=2^{\ceiling{n^{1/(\log^{(k)}n)^{a+1}}}}$ for an arbitrary
constant $c>0$. By statement~\ref{state3} of
Lemma~\ref{concrete-well-functions-lemma}, we have
$T(g(n))=2^{n^{o(1)}}$. The three functions satisfy the conditions
in Theorem~\ref{padding-theorem}. The corollary follows from
Theorem~\ref{padding-theorem}.
\end{proof}

%\begin{definition}
%Let $\Gamma$ be class of language. Let $g(n)$ be a function from $N$
%to $N$ and satisfy $g(n)>n$ for all $n$. Define $\Pad(\Gamma, g(n))$
%to be the class of all languages $L_B=\{s10^{g(|s|)-|s|-1}:s\in B\}$
%for $B\in \Gamma$.
%\end{definition}

We separate both $\NEXP$ and $\NE$ from $\NP_{\rm T}({\pad(\NP,
g(n))})$ for any    super-polynomial time constructible function
$g(n)$ from $N$ to $N$ in Theorems~\ref{next-np-padding-theorem} and
\ref{ne-np-padding-theorem}. For a given $g(n):N\rightarrow N$,
$\NP_{\rm T}({\pad(\NP, g(n))})$ is a concrete computational
complexity class.

\begin{theorem}\label{next-np-padding-theorem}
 Assume that $g(n)$ is a super-polynomial time constructible
function from $N$ to $N$. Then $\NEXP\not\subseteq \NP_{\rm
T}({\pad(\NP, g(n))})$.
\end{theorem}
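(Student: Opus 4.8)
The plan is to reduce Theorem~\ref{next-np-padding-theorem} to the already-established Theorem~\ref{main-theorem} by showing that for any super-polynomial time constructible $g(n)$, the class $\pad(\NP, g(n))$ sits inside $\NP \cap \nexpd$. First I would verify the density condition: by Lemma~\ref{density0-lemma}, for every language $A$ and every super-polynomial $g(n)$, the padded language $\pad(A, g(n))$ has subexponential density, i.e. $|\pad(A,g(n))^{\le n}| \le 2^{n^c}$ for all large $n$ and every constant $c>0$. Subexponential density (Definition~\ref{subexponential-density-def}) is strictly stronger than being nonexponentially dense, so $\pad(A, g(n)) \in \nexpd$ for every $A$.

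Next I would check membership in $\NP$: if $A \in \NP$ and $g(n)$ is time constructible, then $\pad(A, g(n)) = \{x10^{g(|x|)-|x|-1} : x \in A\}$ is in $\NP$. Given an input $w$ of length $m$, one parses $w$ to locate the first $1$ after a prefix $x$, checks in polynomial time (using time constructibility of $g$) that $|w| = g(|x|)$ and that the suffix has the correct all-zeros form $10^{g(|x|)-|x|-1}$, and then runs the $\NP$ machine for $A$ on $x$; since $|x| \le |w|$, this is a polynomial-time nondeterministic computation. Hence $\pad(\NP, g(n)) \subseteq \NP \cap \nexpd$.

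From here the conclusion is immediate: $\NP_{\rm T}$ is monotone under inclusion of the oracle class, so $\NP_{\rm T}(\pad(\NP, g(n))) \subseteq \NP_{\rm T}(\NP \cap \nexpd)$, and Theorem~\ref{main-theorem} gives $\NEXP \not\subseteq \NP_{\rm T}(\NP \cap \nexpd)$; therefore $\NEXP \not\subseteq \NP_{\rm T}(\pad(\NP, g(n)))$.

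I do not expect a genuine obstacle here — the theorem is essentially a packaging corollary of the main separation. The only point that needs mild care is the $\NP$-membership argument for the padded language: one must confirm that verifying $|w| = g(|x|)$ can be done within a polynomial in $|w|$, which is exactly what time constructibility of $g$ buys us (computing $g(|x|)$ takes $O(g(|x|)) = O(|w|)$ steps), and that the decomposition $w = x10^{g(|x|)-|x|-1}$ is uniquely determined by $w$ so there is no ambiguity in the parsing. With that observed, the proof is a two-line invocation of Lemma~\ref{density0-lemma} and Theorem~\ref{main-theorem}.
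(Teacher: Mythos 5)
Your proof is correct and is essentially the paper's own argument: the paper dispatches this theorem by citing exactly Lemma~\ref{density0-lemma} and Theorem~\ref{main-theorem}, and you have simply filled in the routine details (density via Lemma~\ref{density0-lemma}, $\NP$-membership of the padded language via time constructibility, and monotonicity of $\NP_{\rm T}(\cdot)$ in the oracle class). The only small imprecision is the claim that the decomposition $w = x10^{g(|x|)-|x|-1}$ is \emph{uniquely} determined by $w$ --- this needs $g$ to be injective, which super-polynomiality alone does not guarantee --- but this is harmless since for $\NP$-membership one may nondeterministically guess the split point and verify.
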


\begin{proof}
%Since $g(n)$ is a strictly increasing well-super-polynomial
%function, there is a  sub-polynomial function $f(n)$ with
%$f(g(n))\ge n$ by Definition~\ref{well-super-sub-def}.
It follows from Lemma~\ref{density0-lemma} and
Theorem~\ref{main-theorem}.
\end{proof}

\begin{theorem}\label{ne-np-padding-theorem}
 Assume that $g(n)$ is a super-polynomial time constructible
function from $N$ to $N$. Then $\NE\not\subseteq \NP_{\rm
T}({\pad(\NP, g(n))})$.
\end{theorem}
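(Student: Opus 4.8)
The plan is to reduce this to the $\NEXP$ version already proved in Theorem~\ref{next-np-padding-theorem}, via the translational Lemma~\ref{ne-nexp-lemma}. Set $\Gamma=\NP_{\rm T}(\pad(\NP,g(n)))$. First I would verify that $\Gamma$ is closed under $\le_m^{\P}$-reductions: if $B\le_m^{\P}C$ through a polynomial time function $f$, and $C=L(M^{D})$ for a polynomial time nondeterministic oracle machine $M$ with oracle $D\in\pad(\NP,g(n))$, then the machine that on input $x$ first computes $f(x)$ and then simulates $M$ on $f(x)$ witnesses $B\in\Gamma$ with the \emph{same} oracle $D$. This is routine, since polynomial time many-one reductions compose with polynomial time nondeterministic Turing reductions and the oracle is unchanged.

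Given this closure, Lemma~\ref{ne-nexp-lemma} applied with this $\Gamma$ gives that $\NE\subseteq\Gamma$ if and only if $\NEXP\subseteq\Gamma$. Since Theorem~\ref{next-np-padding-theorem} already establishes $\NEXP\not\subseteq\Gamma$, we conclude $\NE\not\subseteq\Gamma$, which is exactly the statement.

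An alternative and perhaps cleaner route goes directly through Theorem~\ref{ne-main-theorem}. By Lemma~\ref{density0-lemma}, for every language $A$ the padded language $\pad(A,g(n))$ has subexponential density, hence in particular is nonexponentially dense; and when $A\in\NP$ and $g$ is time constructible, $\pad(A,g(n))\in\NP$ (given $y$ of length $m$, search for a length $n\le m$ with $g(n)=m$ and $y=x10^{g(n)-n-1}$, then nondeterministically verify that the prefix $x$ lies in $A$). Therefore $\pad(\NP,g(n))\subseteq\NP\cap\nexpd$, so $\NP_{\rm T}(\pad(\NP,g(n)))\subseteq\NP_{\rm T}(\NP\cap\nexpd)$, and Theorem~\ref{ne-main-theorem} finishes the argument.

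Neither route has a genuine obstacle; the only point needing a little care is the membership $\pad(A,g(n))\in\NP$ when $g$ is merely time constructible and super-polynomial (not necessarily monotone): one uses that $g(n)\ge n$ for all large $n$ to bound the search for the preimage length by $m$, and the fact that the computation of $g(n)$ can be aborted as soon as a partial value exceeds $m$. The remaining steps are bookkeeping with the definitions of $\pad$, $\nexpd$, and the closure properties of polynomial time nondeterministic Turing reductions.
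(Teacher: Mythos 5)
Your first route --- taking $\Gamma=\NP_{\rm T}(\pad(\NP,g(n)))$, checking closure under $\le_m^{\P}$-reductions, and then combining Lemma~\ref{ne-nexp-lemma} with Theorem~\ref{next-np-padding-theorem} --- is exactly the paper's proof, and you spell out the closure verification that the paper leaves implicit. Your alternative route through Lemma~\ref{density0-lemma} and Theorem~\ref{ne-main-theorem} is also correct, but the primary argument matches the paper's.
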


\begin{proof}
It follows from Lemma~\ref{ne-nexp-lemma} and
Theorem~\ref{next-np-padding-theorem}.
\end{proof}

%\newpage

%{\LARGE Appdenix}

\section{Separating $\NEXP$ from $\NP_{\rm T}({\P_{tt}(\NP)\cap
\tally})$}\label{NE-tally-sec}

In this section, we separate $\NEXP$ from $\NP_{\rm
T}({\P_{tt}(\NP)\cap \tally})$. A more generalized theorem is given
by Theorem~\ref{tally-theorem}. We are more carefully to combine the
counting method with the translational method to prove it.

\begin{definition}\scrod
\begin{itemize}
\item
 Let $M_1$ be a nondeterministic oracle Turing machine and $M_2$
be a deterministic oracle Turing machine.
 Define $M_1^{M_2}$ be a nondeterministic Turing machine such
that $M_1(x)$ takes an input $x$, each query $y$ produced by $M_1$
is answered by $M_2(y)$, which will access an oracle during the
computation.
\item
 Let $M_1$ be a nondeterministic oracle Turing machine and $M_2$
be a deterministic oracle Turing machine. Let $A$ be an oracle for
$M_2$. Define $(M_1^{M_2})^A$ be a nondeterministic Turing machine
$M_1^{M_2}$ with oracle $A$ such that $M_1(x)$ takes an input $x$,
each query $y$ produced by $M_1$ is answered by $M_2^A(y)$.
\end{itemize}
\end{definition}

\begin{definition}\scrod
\begin{itemize}
\item
For an  oracle Turing machine $M$ and an integer $k$, define $PQ(M,
y, k)$ to be the union of all $H(M(y), a_1\cdots a_{i-1})$ (see
Definition~\ref{query-def}) with $i\le k$ and $a_1\cdots a_{i-1}\in
\{0,1\}^{\le k}$.
\item
Assume that $M_1$ is a nondeterministic Turing machine and $M_2$ is
a deterministic adaptive oracle Turing machine $M(.)$. Let  $A$ be
an oracle set, and $k$ is an integer. Define
\begin{eqnarray}
Q_1(M_1^{M_2}, A, B, k_1,k_2, m)=\bigcup_{z\in \left(\bigcup_{y\in
B^{=m}}PQ(M_1,y,k_1)\right)} (A\cap PQ(M_2, z,
k_2)).\label{Q1-def-eqn}
\end{eqnarray}
\end{itemize}
\end{definition}

The purpose of Lemma~\ref{det-lemma} for the proof of
Theorem~\ref{tally-theorem} is similar to Lemma~\ref{nondet-lemma}
for Theorem~\ref{main-theorem}.

\begin{lemma}\label{det-lemma}
Assume that $A\in\NP$, $B\in \NP$
%(\NP\cap \coNP)\cap \D(d(n))$
and $M_1(.)$ and $M_2()$ are polynomial time nondeterministic Turing
machines. Then there is a nondeterministic machine $N(.)$ such that
given the input $(M_1^{M_2}, M_A, M_B, k_1, k_2, 1^n)$, if $m\le
|Q_1(M_1^{M_2},A,B, k_1,k_2,n)|$, it outputs a subset of $m$
different elements of $Q_1(M_1^{M_2},A,B,k_1, k_2, n)$ in time
$mn^{O(1)}$ in at least one path; and otherwise, it outputs empty
set in every path, where $M_A$ is an polynomial time
nondeterministic Turing machine to accept $A$, and $M_B$ is a
polynomial time nondeterministic Turing machine to accept $B$.
\end{lemma}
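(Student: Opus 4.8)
The plan is to follow the proof of Lemma~\ref{nondet-lemma} closely, adding one more level of guessing to account for the two-layer structure of $Q_1$. Unwinding Definition~\ref{query-def} and the definition of $Q_1$, a string $w$ lies in $Q_1(M_1^{M_2},A,B,k_1,k_2,n)$ exactly when there exist: a string $y\in B^{=n}$; an index $i\le k_1$, an answer prefix $a_1\cdots a_{i-1}\in\{0,1\}^{\le k_1}$, and a computation path of $M_1(y)$ that, fed those answers for its first $i-1$ queries, issues some string $z$ as its $i$-th query; an index $j\le k_2$, an answer prefix $b_1\cdots b_{j-1}\in\{0,1\}^{\le k_2}$, and a computation path of $M_2(z)$ that, fed those answers, issues $w$ as its $j$-th query; and finally $w\in A$. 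Each such ingredient is an object of length $n^{O(1)}$ (the strings $z$ and $w$ are produced by the polynomial-time machines $M_1$ and $M_2$, so these strings, the paths, and the answer prefixes are all short), and each can be verified in time $n^{O(1)}$: membership of $y$ in $B$ and of $w$ in $A$ by guessing accepting computations of $M_B$ and $M_A$, and the two query-emission facts by simulating $M_1(y)$ and $M_2(z)$ along the guessed paths with the guessed answers and reading off the string emitted at the designated step.

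Given this, I would let $N$, on input $(m,M_1^{M_2},M_A,M_B,k_1,k_2,1^n)$, first guess candidate strings $w_1,\dots,w_m$, then for each $\ell=1,\dots,m$ independently guess and check a certificate of the above shape for $w_\ell$ (guessing $y_\ell$ of length $n$, an accepting path of $M_B(y_\ell)$, the $M_1$-side index/answer prefix/path producing some $z_\ell$, the $M_2$-side index/answer prefix/path producing $w_\ell$, and an accepting path of $M_A(w_\ell)$), and finally sort $w_1,\dots,w_m$ to verify they are pairwise distinct; if every per-element check passed and they are distinct, $N$ outputs $w_1,\dots,w_m$, and otherwise it outputs $\emptyset$. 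Each of the $m$ certificates costs $n^{O(1)}$ steps, and sorting $m$ strings of length $n^{O(1)}$ costs $m\cdot n^{O(1)}$ (using $\log m\le n$, as $m\le 2^n$ in the intended application), for a total running time of $m\,n^{O(1)}$.

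For correctness: any path that emits a nonempty set $\{w_1,\dots,w_m\}$ has passed all checks, so each $w_\ell$ carries a complete certificate and hence belongs to $Q_1$, and the $w_\ell$ are distinct; thus such a path can exist only when $m\le|Q_1|$, and when it does its output is exactly an $m$-element subset of $Q_1$. Conversely, if $m\le|Q_1|$, choosing $m$ distinct members of $Q_1$ and fixing a certificate for each yields a guessing path of $N$ that outputs exactly that subset; hence when $m>|Q_1|$ every path outputs $\emptyset$. The only genuine care needed is the ``$n^{O(1)}$'' accounting --- in particular that $M_2$ is only ever invoked on strings $z$ of polynomial length, which holds because $z$ is a query of the polynomial-time machine $M_1$, and that the guessed indices and answer strings are capped by $k_1$ and $k_2$ so the simulations remain within the sets $PQ(M_1,y,k_1)$ and $PQ(M_2,z,k_2)$ of the definition. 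Beyond this bookkeeping, the lemma is a routine two-level elaboration of Lemma~\ref{nondet-lemma}, so I do not expect a real obstacle.
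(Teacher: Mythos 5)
Your proof is correct and follows essentially the same approach as the paper: guess the $m$ elements together with a full certificate chain for each (a length-$n$ member of $B$, an accepting computation of $M_B$, a query-emission witness for $M_1$, a query-emission witness for $M_2$, and an accepting computation of $M_A$), then verify and output. One small point in your favor: you explicitly include a distinctness check on the $m$ output strings, which the paper's written version of this lemma omits (though the analogous Lemma~\ref{nondet-lemma} does include it, so the omission looks like an oversight rather than a design choice).
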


\begin{proof}
%Let $M_B$ be a polynomial time nondeterministic Turing machine to
%accept $B$.
%, and $M_{\overline{B}}$ be a polynomial time nondeterministic Turing machine to accept $\overline{B}$, the complementary language of $B$.
% Assume that $M_B$
 %and $M_{\overline{B}}$
% runs in time $n^{c_B}$ for a positive constant
%$c_B$. Let $M_A$ be a nondeterministic polynomial time Turing
%machine to accept $A$ in time $n^{c_A}$ for a constant $c_A$. Let
%$M_*$ be a nondeterministic that $M_*()$ runs in time $n^{c_q}$ for
%a constant $c_q$.
 We design a nondeterministic
Turing machine $N(.)$.
%Let $N(.)$ nondeterministically
Let $N(.)$ do the following with input $(M_*, M_A, M_B, k_1, k_2,
1^n)$:

\begin{enumerate}[1.]
\item
guess strings $x_1,\cdots, x_m$ of length $n$,
%\item
%guess different strings $z_1,\cdots, z_m$ of length at most $t(n)$,
\item
guess a path $h_i$ of $M_B(x_i)$ for each $x_i$,
\item
guess a path $p_i$ and a query $y_i$ for each $M_1(x_i)$,
\item
guess a path $w_i$ and a query $z_i$ for each $M_2(y_i)$, and
\item
 guess a path $q_i$ for $M_A(z_i)$ for $i=1,\cdots, m$.
\item
 If $M_B(x_i)$ accepts in path $h_i$, $M_1(x_i)$
queries $y_i$ in path $p_i$ for $i=1,\cdots, m$, $M_2(y_i)$ queries
$z_i$ in path $w_i$ for $i=1,\cdots, m$, and $M_A(z_i)$ accepts in
path $q_i$ for $i=1,\cdots, m$, then $N$ outputs all $z_1,\cdots,
z_m$. Otherwise, $N$ outputs $\emptyset$.
\end{enumerate}

Note that for a path $p_i$ and a query $y_i$ for $M_1(x_i)$, a part
of path $p_i$ is  $a_1\cdots a_{j-1}, j$ with $j\le k_1$ such that
$M_1(x_i)$ follows
 path $p_i$ and its $j$-th query is $y_i$ assuming it
receives the $j-1$ answers are $a_1\cdots a_{j-1}$.

Note that for a path $w_i$ and a query $z_i$ for $M_2(y_i)$, a part
of path $w_i$ is $b_1\cdots b_{j-1}, j$ with $j\le k_2$ such that
$M_2(y_i)$ follows a path $w_i$ and its $j$-th query is $z_i$
assuming it receives the $j-1$ answers are $b_1\cdots b_{j-1}$.

Since $M_1(.), M_2(.), M_A(.)$ and $M_B(.)$ all run in polynomial
time, we have that the time for $N(.)$ is bounded by $mn^{O(1)}$.
\end{proof}

\begin{definition}
For a set $B$, define $\wp(B)$ to be the {\it power set} of $B$ (the
class of all subsets of $B$).
\end{definition}

Theorem~\ref{tally-theorem} gives another separation for $\NEXP$
from the polynomial time hierarchy. It is incomparable with
Theorem~\ref{main-theorem}.

\begin{theorem}\label{tally-theorem}
Assume that $B$ is an language in $(\NP\cap \coNP)\cap \nexpd$. Then
for any well sub-polynomial function $g(n)$,  $\NEXP\not\subseteq
\NP_{\rm T}(\P_{g(n)-\T}(\NP)\cap \wp(B))$.
\end{theorem}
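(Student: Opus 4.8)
The plan is to proceed by contradiction in the same spirit as the proof of Theorem~\ref{main-theorem}, but now iterating the counting argument \emph{twice} to peel off both layers of the reduction chain $\NP_{\rm T}(\cdot)$ applied to $\P_{g(n)-\T}(\NP)$. Assume $\NEXP \subseteq \NP_{\rm T}(\P_{g(n)-\T}(\NP) \cap \wp(B))$ for some well sub-polynomial $g(n)$ and some $B \in (\NP \cap \coNP) \cap \nexpd$. Since $g$ is well sub-polynomial, there is a time-constructible super-polynomial $h(n)$ with $g(h(n)^c) \le n$ for every constant $c$ and all large $n$. As in Lemma~\ref{e(n)-lemma}, extract from the nonexponential density of $B$ a slowly-growing computable function controlling $|B^{\le n}|$ along an infinite set of lengths; I would then take $L_0$ an arbitrary language in $\DT(2^{h(n)})$ (with $h$ adjusted to beat $g$ in the above sense) and pad it up to $L_1 = \{x10^{h(|x|)-|x|-1} : x \in L_0\} \in \EXP \subseteq \NEXP$.

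By the assumption, $L_1$ is accepted by a nondeterministic polynomial-time machine $M_1$ that queries an oracle $C \in \P_{g(n)-\T}(\NP) \cap \wp(B)$; and $C$ itself, being in $\P_{g(n)-\T}(\NP)$, is decided by a deterministic polynomial-time machine $M_2$ making at most $g(\cdot)$ queries to some $A \in \NP$. The key objects are exactly the $Q_1(M_1^{M_2}, A, B, k_1, k_2, m)$ sets from the definition preceding the theorem: I would bound $|Q_1|$ on inputs of length $h(n)$. The crucial counting step is that every string $M_1$ queries to $C$ that is actually \emph{in} $C$ must lie in $B$ (since $C \subseteq B$), so the number of distinct ``yes'' answers $M_1$ receives from $C$ is at most $d_B(h(n)^{c_1})$, which is sub-exponential along the infinitely many good lengths; and for each such query $z$, answering ``$z \in C$?'' reduces via $M_2$ to at most $g(|z|)$ queries into $A$, so the total information needed is $|Q_1| \le d_B(h(n)^{c_1}) \cdot g(h(n)^{c_1}) \cdot (\text{poly})$, which the choice $g(h(n)^c) \le n$ forces below $2^n$ for large good $n$. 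Then, by Lemma~\ref{det-lemma}, the analogues $L_2, L_3$ of the auxiliary languages (guessing and verifying the $m$ relevant elements of $Q_1$, then using them to simulate $M_1^{M_2}$ with all oracle answers replaced by membership tests against the guessed list) are in $\NEXP$, hence in $\NP_{\rm T}(C)$; doing a binary search for the largest admissible $m_{n_i}$ in $2^{n_i^{O(1)}}$ time (using $A, B \in \NP$) yields a deterministic $2^{n^{O(1)}}$-time machine agreeing with $L_0$ on infinitely many input lengths $n_i$, contradicting Lemma~\ref{infinitely2-cases-lemma} since $h$ is super-polynomial. Finally, $\NEXP \subseteq \Gamma$ versus $\NE \subseteq \Gamma$ is irrelevant here since the theorem is stated for $\NEXP$; but the same closure trick (Lemma~\ref{ne-nexp-lemma}) would give the $\NE$ version if desired.

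The main obstacle I anticipate is handling the \emph{adaptivity} of $M_2$ correctly inside the counting bound: because $M_2$ is an adaptive deterministic oracle machine, the set of strings it queries to $A$ on input $z$ depends on the answers it has already received, so one must use $PQ(M_2, z, k_2)$ — the union over all answer-sequences of length $\le k_2$ — rather than a single query path, and this is why the definition of $Q_1$ takes that union. One has to check that $|PQ(M_2,z,k_2)| \le 2^{k_2} \cdot (\text{poly})$ with $k_2 = g(|z|)$ is still sub-$2^n$ after multiplying by the $d_B$ factor; this works precisely because $g$ is sub-polynomial, so $2^{g(h(n)^{c_1})}$ is only $2^{n^{o(1)}}$-ish and is dominated. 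The secondary subtlety is that we only need $B \in \NP \cap \coNP$ (not $\in \nexpd$ as an oracle we query, but as a density bound) and only need $C \subseteq B$ for a \emph{fixed} $B$ of controlled density — the $\coNP$ hypothesis on $B$ is what lets the verification machine $N$ in Lemma~\ref{det-lemma} confirm \emph{non}-membership in $B$ when pruning the query list, which is needed so that the simulated computation of $M_1^{M_2}$ with guessed oracle answers is faithful. Once these points are pinned down, the rest is the now-routine translational bookkeeping mirroring the proof of Theorem~\ref{main-theorem}.
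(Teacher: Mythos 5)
Your proposal follows essentially the same route as the paper's proof: reduce via a $\le^{\P}_m$-complete set to a single $K\subseteq B$ with $K\in\P_{g(n)-\T}(\NP)$, pad a diagonal language $L_0\in\DT(2^{h(n)})$ with $h$ chosen to simultaneously beat the density function of $B$ (via Lemma~\ref{e(n)-lemma}) and the query bound $g$, bound $|Q_1|$ by the product of the $B$-density and $2^{g(\cdot)}$ along good lengths, and then use Lemma~\ref{det-lemma} together with $L_2$/$L_3$ and binary search to beat Lemma~\ref{infinitely2-cases-lemma}. Your first statement of the counting bound has a $g(h(n)^{c_1})$ factor where the paper has $2^{g(h(n)^{c_1})}$ (coming from the adaptivity of $M_q$), but you explicitly correct this in your ``main obstacle'' paragraph, so the argument as a whole matches the paper's.
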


\begin{proof}
We use a combination of counting method and translational method to
prove this theorem. Let $M_B$ be a polynomial time nondeterministic
Turing machine to accept $B$, and $M_{\overline{B}}$ be a polynomial
time nondeterministic Turing machine to accept $\overline{B}$. Let
$\SAT$ be the well known $\NP$-complete problem.

Assume $\NEXP\subseteq \NP_{\rm T}(\P_{g(n)-\T}(\NP)\cap \wp(B))$.
Since $\NEXP$ has a complete language under $\le^{\P}_m$-reductions,
we assume $\NEXP\subseteq \NP_{\rm T}({K})$ for some $K\subseteq B$
and also $K\in \P_{g(n)-\T}(\NP)$.
%Let $f_d(n)$ be a super-polynomial function such that for each constant $c>0$, $f_d(d(n)^c)\le n$ for all large $n$.
Let $K\in \P_{g(n)-\T}(\SAT)$ via oracle Turing machine $M_q(.)$.
Let $n^{c_q}$ be the running time of $M_q$.  Since $B\in \NP\cap
\coNP$ and is of nonexponential density, by Lemma~\ref{e(n)-lemma},
we have $e(1^n)$ to be a nondecreasing function with
$\lim_{n\rightarrow \infty} e(1^n)=\infty$,
\begin{eqnarray}
e(1^{n^2})\le 2e(1^n),\label{double-ineqn}
\end{eqnarray}
 and $(|B^{\le n}|)\le 2^{n^{1/e(1^n)^2}}$ for
infinitely many integers $n$. Furthermore, function $e(1^n)$ is
computable in $2^{n^{O(1)}}$ time.

%we have a function $e_B(n):N\rightarrow N$ by
%Lemma~\ref{e(n)-lemma}.

Since $g(n)$ is a well sub-polynomial function, let $h_g(n)$ be a
well super-polynomial function (see
Definition~\ref{well-super-sub-def}) such that for each positive
constant $c$,
\begin{eqnarray}
g(h_g(n)^c)\le n\ \ \ \  \ \mbox{for\ all\ large\ integers\
$n$.}\label{hg0-def-eqn}
\end{eqnarray}
Let
\begin{eqnarray}
h(n)=\min(n^{e(1^n)}, h_g(n), 2^n).\label{h-def-eqn}
\end{eqnarray}

 We apply the translational method to it. Let $L$ be an
arbitrary language in $\DT(2^{{h(n)}})$. Define $L_1=\{x10^{h(n)}:
x\in L\}$.

Since $e(1^n)$ is computable in $2^{n^{O(1)}}$ time and $h_g(n)$ is
time constructible, we have that $L_1$ is in NEXP. Let
$L_1\in\NP_{\rm T}({K})$ via an nondeterministic oracle Turing
machine $M_1(.)$ with oracle $K$. Assume that $M_1(.)$ runs in time
$n^{c_1}$ for all $n\ge 2$. Let $u_1<u_2<\cdots <u_k<\cdots$ be the
infinite list of integers at least $2$ such that
\begin{eqnarray}
d_B(u_i)=(|B|^{\le u_i})\le
2^{u_i^{1/e(1^{u_i})^2}}.\label{ui-d-ineqn}
\end{eqnarray}
 Let $n_i$ be the
largest integers at least $2$ such that
\begin{eqnarray}
h(n_i)^{c_1}\le u_i \label{h(ni)-ui-ineqn}
\end{eqnarray}
 for all sufficiently large integers $i$. Thus, we have
\begin{eqnarray}
h(n_i+1)^{c_1}> u_i. \label{h(ni+1)-ui-ineqn}
\end{eqnarray}

Define $L_2=\{1^n0m: m\le 2^{2n}$ there are $m$ different elements
in $Q_1(M_1^{M_q},\SAT,B,h(n)^{c_1},g (h(n)^{c_1}), h(n))$ $\}$.

The number of strings $z\in B$ queried by $M_1(.)$ with inputs of
length $h(n)$ is at most $d_B(h(n)^{c_1})$ since the length of $z$
is at most $h(n)^{c_1}$. In other words,
\begin{eqnarray}
\left|\bigcup_{y\in \Sigma^{=h(n)}}PQ(M_1,y,h(n)^{c_1})\right|\le
d_B(h(n)^{c_1}).\label{PQ-M1-ineqn}
\end{eqnarray}

As $M_q$ is a deterministic oracle Turing machine with the number of
queries bounded by function $g(.)$, we have
\begin{eqnarray}
|PQ(M_q,z,g(h(n)^{c_1})|\le 2^{g(h(n)^{c_1})}\ \ \ \ \mbox{ for\
each\ $z$\ of\ length\ at\ most\ $h(n)^{c_1}$.}\label{PQ-Mq-ineqn}
\end{eqnarray}

By equations~(\ref{Q1-def-eqn}), (\ref{PQ-M1-ineqn}), and
(\ref{PQ-Mq-ineqn}), we have the inequality
\begin{eqnarray}
|Q_1(M_1^{M_q},\SAT,B,h(n)^{c_1},g (h(n)^{c_1}), h(n))|\le
d_B(h(n)^{c_1})2^{g(h(n)^{c_1})}\ \ \ \ \ \ \mbox{
 for\  all\ large\ $n$.} \label{Q1-almost-ready-ineqn}
\end{eqnarray}

 We will show
this number is less than $2^{2n_i}$ if $n=n_i$ for all large $i$.
For all large  $i$, we have
\begin{eqnarray}
e(1^{n_i})\ge 8c_1 \label{e(ni)-8c1-ineqn2}
\end{eqnarray}
 since $e(1^n)$ is nondecreasing and unbounded.
 Since $B$ is of density bounded by $d_B(n)$, we have the
 inequalities
 %$M_1(.)$ with inputs of length $h(n_i)$ can query at most
\begin{eqnarray}
d_B(h(n_i)^{c_1})&\le& d_B(u_i)\ \ \ \ \ \mbox{(by\ inequality\ (\ref{h(ni)-ui-ineqn}))}\\
&\le& 2^{u_i^{1/e(1^{u_i})^2}}\ \ \ \ \ \mbox{(by\ inequality\
(\ref{ui-d-ineqn}))}\\
&\le& 2^{(h(n_i+1)^{c_1})^{1/e(1^{u_i})^2}}\ \ \ \ \ \mbox{(by\ inequality\ (\ref{h(ni+1)-ui-ineqn}))}\\
&\le&2^{h(n_i^2)^{c_1/e(1^{u_i})^2}}\ \ \ \ \ \mbox{(by\ the\ condition $n_i\ge 2$)}\\
&\le&2^{(n_i^{2e(1^{n_i^2})})^{c_1/e(1^{n_i})^2}}\ \ \ \ \ \mbox{(by\ equation\ (\ref{h-def-eqn}))}\\
&\le&2^{(n_i^{4e(1^{n_i})})^{c_1/e(1^{n_i})^2}}\ \ \ \ \ \mbox{(by\
equation\ (\ref{double-ineqn}))}\\
 &<& 2^{n_i}. \ \ \ \ \ \mbox{(by\
inequality\ (\ref{e(ni)-8c1-ineqn2}))}\label{db-query-ineqn}
\end{eqnarray}

Therefore,
\begin{eqnarray}
d_B(h(n_i)^{c_1})2^{g(h(n_i)^{c_1})}&\le& d_B(h(n_i)^{c_1})\cdot
2^{n_i}\ \ \ \  \ \mbox{(by\ inequality (\ref{hg0-def-eqn}) and equation (\ref{h-def-eqn}))}\\
&<&2^{2n_i}. \ \ \ \  \ \mbox{(by\ inequality (\ref{db-query-ineqn})
)}\label{total-queries-ineqn}
\end{eqnarray}

By inequalities~(\ref{Q1-almost-ready-ineqn}), and
(\ref{total-queries-ineqn})
\begin{eqnarray}
|Q_1(M_1^{M_q},\SAT,B,h(n_i)^{c_1},g (h(n_i)^{c_1}), h(n_i))|<
2^{2n_i}\ \ \ \ \ \ \mbox{
 for\  all\ large\ $i$.}
\end{eqnarray}

We can assume that $m\le 2^{2n}$ (otherwise, $1^n0m\not\in L_2$).
Since $M_1(.)$ and $M_q(.)$ run in $n^{c_1}$ and $n^{c_q}$ time,
respectively, we have that $M_1^{M_q}(.)$ runs in $n^{c_1c_q}$ time.
 By Lemma~\ref{det-lemma}, the decision if $1^n0m\in L_2$ can be
 made by a nondeterministic Turing machine in $mh(n)^{O(1)}= 2^{O(n)}$ time for all large $n$.
 We have $L_2\in \NEXP$. Thus, $L_2\in \NP_{\rm T}(K)$
via a nondeterministic Turing machine $M_2(.)$. Since $K\in
\P_T^{\NP}$, there is a constant $c_2$ such that we can find the
largest $m_n$ in time $2^{n^{c_2}}$ such that $1^n0m\in L_2$.

%************

Define the language $L_3=\{(x,m):$ there are at least $m$ different
strings $z_1,\cdots, z_m$ in\\
 $Q_1(M_1^{M_q}, \SAT, B,h(n)^{c_1},
g(h(n)^{c_1}), h(n))$, and $(M_1^{M_q})^{\SAT}(x10^{h(|x|)-|x|-1)})$
has an accept path that receives answer $1$ for each query (to
\SAT), which is generated by some $y\in B$, in $\{z_1,\cdots,
z_m\}$, and answer $0$ for each query (to \SAT), which is generated
by some $y\in B$, not in $\{z_1,\cdots, z_m\}$ $\}$.

% The number of queries to $\SAT$ made by $M_1^{M_q}$ is bounded by $2^{2n_i}$ for the all inputs of length $n_i$ by inequality~(\ref{total-queries-ineqn}).
By
Lemma~\ref{nondet-lemma}, we have $L_3\in \NE$. Thus, $L_3\in
\NP_{\rm T}({K})$ via another polynomial time nondeterministic
Turing machine $M_3(.)$. Assume that $M_3(.)$ runs in time $n^{c_3}$
for all $n\ge 2$.

Assume $n_i=|x|$. In order to find the largest number $m$ such that
$1^{n_i}0m\in L_2$, $m$ is always at most $2^{2n_i}$. Thus, the
length of $m$ is at most $2n$. Using the binary search, we can find
the largest $m$ with $1^{n_i}0m\in L_2$. Let $m_{n_i}$ be the
largest $m$ with $1^{n_i}0m\in L_2$. Since $\SAT\in \NP$, $m_{n_i}$
can be computed in $2^{n_i^{c_4}}$ deterministic time for some
positive constant $c_4$.

Assume that $m_{n_i}$ is known. We just check if $(x,m_{n_i})\in
L_3$, where $n_i=|x|$. It is easy to see $x\in L$ if and only if
$x10^{h(n_i)-n_i-1}\in L_1$ if and only if $(x,m_{n_i})\in L_3$.
Since $L_3\in \NP_{\rm T}({K})$ and $K\in \P_T^{\NP}$, we only need
$2^{n_i^{c_5}}$ time to decide if $(x,m_{n_i})\in L_3$, where $c_5$
is a positive constant. Therefore, we can decide if $x\in L$ in
$2^{n_i^{c_5}}$ time.

Therefore, there is a deterministic Turing machine $M_*(.)$ that
runs in $2^{n_i^{c_5}}$ time and accepts $L^{n_i}$ for all $i$
sufficiently large. Note that $L$ is an arbitrary language in
$\DT(2^{{h(n)}})$, and function $h(n)$ is a super-polynomial
function. This brings there is a deterministic Turing machine
$M_*(.)$ that runs in $2^{n^{c_5}}$ time and accepts $L^{n_i}$ for
all sufficiently large integers $i$. This contradicts
Lemma~\ref{infinitely2-cases-lemma}.
\end{proof}

It is easy to see that $\{1\}^*$ is a sparse language in
$\P\subseteq \NP\cap \coNP$, and $\tally$ is the power set of
$\{1\}^*$. We have the following corollary.

\begin{corollary}
$\NEXP\not\subseteq \NP_{\rm T}({\P_{g(n)-\T}(\NP)\cap \tally})$ for
any well sub-polynomial function $g(n)$.
\end{corollary}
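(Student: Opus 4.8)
The plan is to obtain this corollary as an immediate specialization of Theorem~\ref{tally-theorem}, taking the density-witness oracle $B$ to be the full one-letter universe $\{1\}^*$. First I would verify that $B=\{1\}^*$ satisfies the hypotheses of that theorem: it lies in $\P$, hence in $\NP\cap\coNP$; and it is sparse, since $|B^{\le n}|=n+1\le 2^{n^c}$ for every constant $c>0$ and all large $n$, so its density function is certainly nonexponential. Thus $B\in(\NP\cap\coNP)\cap\nexpd$, exactly as Theorem~\ref{tally-theorem} requires.

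Next I would observe that the power set $\wp(\{1\}^*)$ is precisely the class $\tally$: a language is a subset of $\{1\}^*$ if and only if all of its strings are built solely from the symbol $1$, i.e.\ it is a tally language. Consequently $\P_{g(n)-\T}(\NP)\cap\wp(B)=\P_{g(n)-\T}(\NP)\cap\tally$, and therefore $\NP_{\rm T}(\P_{g(n)-\T}(\NP)\cap\tally)=\NP_{\rm T}(\P_{g(n)-\T}(\NP)\cap\wp(B))$. Applying Theorem~\ref{tally-theorem} with this choice of $B$ then yields $\NEXP\not\subseteq\NP_{\rm T}(\P_{g(n)-\T}(\NP)\cap\wp(B))$ for every well sub-polynomial $g(n)$, which is the claimed statement.

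Every step here is a routine identification, so I do not expect a genuine obstacle. The only point that warrants a moment's care is checking that Theorem~\ref{tally-theorem} really uses nothing about $B$ beyond membership in $(\NP\cap\coNP)\cap\nexpd$: its proof invokes $B\in\NP\cap\coNP$ only to supply the nondeterministic machines $M_B$ and $M_{\overline{B}}$, and it uses nonexponential density only through Lemma~\ref{e(n)-lemma} to construct the function $e(1^n)$; all of this $\{1\}^*$ trivially provides, so the specialization goes through without modification.
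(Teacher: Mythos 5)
Your proof is correct and matches the paper's own argument exactly: the paper likewise sets $B=\{1\}^*$, notes it is a sparse language in $\P\subseteq\NP\cap\coNP$ (hence nonexponentially dense), observes that $\tally=\wp(\{1\}^*)$, and reads off the corollary from Theorem~\ref{tally-theorem}. No differences to report.
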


It is well known that $\P_{tt}(\NP)=\P_{O(\log n)-\T}(\NP)$
(see~\cite{BussHay%,Hemachandra89,Wagner88
}), we have
corollary~\ref{tt-tall-corollary}.

\begin{corollary}\label{tt-tall-corollary}
$\NEXP\not\subseteq \NP_{\rm T}({\P_{tt}(\NP)\cap \tally})$.
\end{corollary}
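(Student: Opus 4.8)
The plan is to derive Corollary~\ref{tt-tall-corollary} as a direct specialization of Theorem~\ref{tally-theorem}, so the main work is to exhibit the right parameters and invoke the cited characterization of $\P_{tt}(\NP)$. First I would observe that $\{1\}^*$ is a language in $\P\subseteq \NP\cap\coNP$, it is sparse (indeed $|\{1\}^{*\le n}| = n+1$), and hence $\{1\}^*\in\nexpd$ trivially since $n+1<2^{n^c}$ for all large $n$ and every $c>0$. Therefore $B=\{1\}^*$ satisfies the hypothesis of Theorem~\ref{tally-theorem} that $B\in(\NP\cap\coNP)\cap\nexpd$. Moreover $\tally = \wp(\{1\}^*)$, since every tally language is a subset of $\{1\}^*$ and conversely.

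Next I would pick the function $g(n)=\ceiling{\log n}$, which is a well sub-polynomial function in the sense of Definition~\ref{well-super-sub-def}: it is time constructible, it is sub-polynomial (eventually below $n^c$ for every $c>0$), and one can take the witnessing time-constructible super-polynomial function to be, e.g., $h(n)=2^{\sqrt{n}}$ (or $n^{\log n}$), for which $g(h(n)^c)=\ceiling{c\sqrt n}\le n$ for all large $n$ and every fixed constant $c$. So Theorem~\ref{tally-theorem} applies with this $g$ and this $B$, yielding
\begin{eqnarray}
\NEXP\not\subseteq \NP_{\rm T}(\P_{O(\log n)\dash\T}(\NP)\cap\tally).\label{cor-step-eqn}
\end{eqnarray}
Finally I would invoke the well-known fact that $\P_{tt}(\NP)=\P_{O(\log n)\dash\T}(\NP)$ (this is the Buss--Hay result cited just before the corollary: a polynomial-time truth-table reduction to an $\NP$ oracle can be simulated by an adaptive reduction making only $O(\log n)$ queries, using binary search on the number of ``yes'' answers among the parallel queries, and conversely $O(\log n)$ adaptive queries fan out to polynomially many nonadaptive ones). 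Substituting this identity into (\ref{cor-step-eqn}) gives exactly $\NEXP\not\subseteq\NP_{\rm T}(\P_{tt}(\NP)\cap\tally)$.

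The only genuine subtlety, and the step I would be most careful about, is checking that the $\tally$/$\wp(\{1\}^*)$ identification interacts correctly with the class-level operators: we need $\P_{tt}(\NP)\cap\tally = \P_{O(\log n)\dash\T}(\NP)\cap\wp(\{1\}^*)$, which holds because the Buss--Hay equivalence is an equality of classes (so intersecting both sides with $\tally$ is legitimate) and because $\tally$ and $\wp(\{1\}^*)$ denote the same collection of languages. Everything else is bookkeeping: verifying the definitional conditions on $g$ and on $B$, which are routine, and then Theorem~\ref{tally-theorem} does all the heavy lifting. I do not anticipate a real obstacle here, since the corollary is explicitly flagged in the text as following from the theorem via the two cited facts.
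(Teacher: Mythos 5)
Your proof is correct and follows the paper's own route: instantiate Theorem~\ref{tally-theorem} with $B=\{1\}^*$ (noting $\{1\}^*\in\P\subseteq\NP\cap\coNP$ is sparse, hence in $\nexpd$, and $\tally=\wp(\{1\}^*)$), take a well sub-polynomial $g(n)$ of order $\log n$, and finish with the Buss--Hay equality $\P_{tt}(\NP)=\P_{O(\log n)\dash\T}(\NP)$. The only cosmetic slip is that applying the theorem with the single function $g(n)=\ceiling{\log n}$ does not literally yield the $O(\log n)$-bounded class in one shot; one should apply it to $g(n)=\ceiling{c\log n}$ for the constant $c$ arising from the particular $\P_{tt}(\NP)$ oracle, but this is the same minor handwave the paper itself makes.
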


\section{Conclusions}\label{conclusion-sec}
 We show that $\NEXP\not\subseteq \NP_{\rm T}({\NP\cap \nexpd})$. This result has almost reached the limit of relativizable technology.
   A fundamental open problem is to separate NEXP from BPP. We
would like to see the further step toward this target. Our method is
a relativizable. Since there exists an oracle to collapse NEXP to
BPP by Heller~\cite{Heller}, separating $\NEXP$ from $\BPP$ requires
a new way to go through the barrier of relativization. We feel that
it is easy to extend results to super polynomial time classes such
as $\NE\not\subseteq \NT(n^{O(\log n)})_{\T}(\NT(n^{O(\log n)})\cap
\sparse)$. We will present this kind of results in the extended
version of this paper.

%\bibliographystyle{abbrv}
%\bibliography{bib}

\end{document}

\section{Open problems.}

1. Separate NEXP from $NP_{\rm T}({PI})$, where $PI$ represents the
polynomial identity problem.

2. Separate NEXP from $\P_{\rm T}({PI})$.

3. Separate NEXP from $NP_{\rm T}({GI})$, where $GI$ is the graph
isomorphism problem or group isomorphism problem.

4. Separate NEXP from $\P_{\rm T}({GI})$.

5. Change the sub-exponential function to sub-exponential in
infinite places.

\end{document}

\section{Separating $\NEXP$ from $\NP_{\rm T}({\P_{tt}(\NP)\cap \tally})$}

In this section, we separate $\NEXP$ from $\NP_{\rm
T}({\P_{tt}(\NP)\cap \tally})$. A more generalized theorem is given
by Theorem~\ref{tally-theorem}. We are more carefully to combine the
counting method with the translational method to prove it.

\begin{definition}
For a deterministic adaptive oracle Turing machine $M(.)$ and oracle
$A$, and an integer $k$, define $Q_1(M, A, B, k)$ to be the set all
strings $z$ in $A$ such that $z$ is queried by $M^A(y)$ for some
string $y$ of length at most $k$ in $B$.
\end{definition}

\begin{definition}
Let $M_1$ be a nondeterministic oracle Turing machine and $M_2$ be a
deterministic oracle Turing machine. Let $A$ be an oracle for $M_2$.
Define $(M_1^{M_2})^A$ be a nondeterministic Turing machine such
that $M_1(x)$ takes an input $x$, each query $y$ produced by $M_1$
is answered by $M_2^A(y)$.
\end{definition}

\begin{lemma}\label{det-lemma}
Assume that $A\in\NP$, $B\in(\NP\cap \coNP)\cap \D(d_B(n))$ and
$M_q(.)$ runs in time $t(n)$. Then there is a nondeterministic
machine $N(.)$ such that if $m\le |Q_1(M_q,A,B,n)|$, it outputs a
subset of $m$ different elements of $Q(M_q,A,B, n)$ in time
$mt(n)^{O(1)}$ in at least one path; and otherwise, it outputs empty
set in every path.
\end{lemma}

\begin{proof}
Let $M_B$ be a polynomial time nondeterministic Turing machine to
accept $B$, and $M_{\overline{B}}$ be a polynomial time
nondeterministic Turing machine to accept $\overline{B}$, the
complementary language of $B$. Assume that both $M_B$ and
$M_{\overline{B}}$ run in time $n^{c_B}$ for a positive constant
$c_B$. Let $M_A$ be an Turing machine to accept $A$ in time
$n^{c_A}$ for a constant $c_A$. Let $M_q$ be a nondeterministic that
$M_q()$ runs in time $n^{c_q}$ for a constant $c_q$.

 We design a nondeterministic
Turing machine $N(.)$. Let $N(.)$ nondeterministically
\begin{itemize}
\item
guess strings $x_1,\cdots, x_m$ of length $n$,
\item
guess different strings $z_1,\cdots, z_m$ of length at most $t(n)$,
\item
guess a path $h_i$ of $M_B(x_i)$ for each $x_i$,
\item
guess a path $p_i$ for each $M_q(x_i)$, and
\item
 guess a path $q_i$ for $M_A(z_i)$ for $i=1,\cdots, m$.
\end{itemize}

If $M_B(x_i)$ accepts in path $h_i$, $M_q(x_i)$ queries $z_i$ in
path $p_i$ for $i=1,\cdots, m$, and $M_A(z_i)$ accepts in path $q_i$
for $i=1,\cdots, m$, then let  $N$ output all $z_1,\cdots, z_m$.
Otherwise, let $N$ output $\emptyset$.
\end{proof}

\begin{theorem}\label{tally-theorem}
Let $d(n)$ be a well-subexponential function such that
$d(n)=2^{{f(n)}}$ for a sub-polynomial function $f(n)$. Assume that
$B$ is an language in $(\NP\cap \coNP)\cap \D(d(n))$. Then for any
sub-polynomial function $g(n)$, there is no language $K\in
\P_{g(n)-\T}(\NP)$ with $K\subseteq B$ such that $\NEXP\subseteq
\NP_{\rm T}({K})$.
\end{theorem}

\begin{proof}
We still use a combination of counting method and translational
method to prove this theorem. Let $M_B$ be a polynomial time
nondeterministic Turing machine to accept $B$, and
$M_{\overline{B}}$ be a polynomial time nondeterministic Turing
machine to accept $\overline{B}$. Let $\SAT$ be the well known
$\NP$-complete problem.

Assume $\NEXP\subseteq \NP_{\rm T}({K})$ for some $K\subseteq B$ and
also $K\in \P_{g(n)-\T}(\NP)$.
%Let $f_d(n)$ be a super-polynomial function such that for each constant $c>0$, $f_d(d(n)^c)\le n$ for all large $n$.
Let $K\in \P_{g(n)-\T}(\SAT)$ via oracle Turing machine $M_q(.)$ Let
$n^{c_q}$ be the running time of $M_q$.

Let $h_f(n)$ be a well super-polynomial function such that for each
positive constant $c$,  $f(h_f(n)^c)\le n$ for all large $n$.
 Let $g(n)$
be a well sub-polynomial function. Let $h_g(n)$ be a
super-polynomial function such that for each positive constant $c$,
$g(h_g(n))\le n$ for all large $n$. Let $h(n)=\min(h_f(n), h_g(n))$.

 We apply the translational method to it. Let $L$ be an
arbitrary language in $\DT(2^{{h(n)}})$. Define $L_1=\{x10^{h(n)}:
x\in L\}$.

It is easy to see that $L_1$ is in NEXP. There is a language $K\in
\P_{g(n)-\T}(\NP)$ such that $L_1\in\NP_{\rm T}({K})$ via an
nondeterministic machine $M_1(.)$ with oracle $K$. Assume that
$M_1(.)$ runs in time $n^{c_1}$ for all $n\ge 2$.

 Define $L_2=\{1^n0m:$ there are $m$ different elements in
$Q_1(M_1^{M_q},\SAT,B,h(n))$ $\}$. The number of queries $y\in B$
made by $M_1$ is at most $d(h(n)^{c_1})\le 2^{f(h(n)^{c_1})}\le
2^n$. The number of queries made by $M_q$ is at most
$d(h(n)^{c_1})2^{g(h(n)^{c_1c_2})}\le 2^{2n}$ for all large $n$.
Thus, we can assume that $m\le 2^{2n}$ (otherwise, $1^n0m\not\in
L_2$). Since $M_1(.)$ and $M_q(.)$ run in $n^{c_1}$ and $n^{c_q}$
time, respectively, we have that $M_1^{ M_q}(.)$ runs in
$n^{c_1c_q}$ time.
 By Lemma~\ref{det-lemma}, the decision if $1^n0m\in L_2$ can be
 made in $mh(n)^{O(1)}\le 2^{2n}$ time for all large $n$.
 We have $L_2\in \NEXP$. Thus, $L_2\in \NP_{\rm T}(K)$
via a nondeterministic Turing machine $M_2(.)$. Since $K\in
\P_T^{\NP}$, there is a constant $c_2$ such that we can find the
largest $m_n$ in time $2^{n^{c_2}}$ such that $1^n0m\in L_2$.

%************

Define the language $L_3=\{(x,m):$ there are at least $m$ different
strings $z_1,\cdots, z_m$ in $Q_1(M_1^{ M_q}, \SAT, B, h(n))$, and
$(M_1^{ M_q})^{\SAT}(x10^{h(|x|)-|x|-1)})$ has an accept path that
receives answer $1$ for each query (to \SAT), which is generated by
some $y\in B$, in $\{z_1,\cdots, z_m\}$, and answer $0$ for each
query (to \SAT), which is generated by some $y\in B$, not in
$\{z_1,\cdots, z_m\}$ $\}$.

Assume $n=|x|$. $M_1(x10^{h(|x|)-|x|-1)})$ only queries the strings
of length at most $h(n)^{c_1}$. There are at most $d(h(n)^{c_1})\le
2^n$ strings in $B$. Thus, the number of strings to $\SAT$ is
bounded by $d(h(n)^{c_1})\cdot g(h(n)^{O(1)})\le 2^{2n}$. By
Lemma~\ref{nondet-lemma}, we have $L_3\in \NE$. Thus, $L_3\in
\NP_{\rm T}({K})$ via another polynomial time nondeterministic
Turing machine $M_3(.)$. Assume that $M_3(.)$ runs in time $n^{c_3}$
for all $n\ge 2$.

In order to find the largest number $m$ such that $1^n0m\in L_2$,
$m$ is always at most $2^n$. Thus, the length of $m$ is at most $n$.
Using the binary search, we can find the largest $m$ with $1^n0m\in
L_2$. Let $m_n$ be the largest $m$ with $1^n0m\in L_2$.  Since
$\SAT\in \NP$, $m_n$ can be computed in $2^{n^{c_4}}$ time for some
positive constant $c_4$.

Assume that $m_n$ is known. We just check if $(x,m_n)\in L_3$, where
$n=|x|$. It is easy to see $x\in L$ if and only if
$x10^{h(n)-n-1}\in L_1$ if and only if $(x,m_n)\in L_3$. Since
$L_3\in \NP_{\rm T}({K})$ and $K\in \P_T^{\NP}$, we only need
$2^{n^{c_5}}$ time to decide if $(x,m_n)\in L_3$, where $c_5$ is a
positive constant. Therefore, we can decide if $x\in L$ in
$2^{n^{c_5}}$ time. Therefore, $L\in \DT(2^{n^{c_5}})$. Since $L$ is
an arbitrary language in $\DT(2^{{h(n)}})$.   Function $h(n)$ is a
super-polynomial function. This brings $\DT(2^{{h(n)}})\subseteq
\DT(2^{n^{c_5}})$, which contradicts the well known hierarchy
theorem about the deterministic computational time complexity
classes.

\end{proof}

\begin{corollary}
$\NEXP\not\subseteq \NP_{\rm T}({\P_{n^{1/\log\log n}-\T}(\NP)\cap
\tally})$.
\end{corollary}

\begin{corollary}
$\NEXP\not\subseteq \NP_{\rm T}({\P_{tt}(\NP)\cap \tally})$.
\end{corollary}

\begin{lemma}\scrod
\begin{itemize}
\item
For each constant integer $k>1$ and constant integer $a\ge 1$, the
function $\ceiling{n^{1/(\floor{\log^{(k)}n})^a}}$ is well
super-polynomial function from $N\rightarrow N$.
\item
For each constant integer $k>1$ and constant integer $a\ge 1$, the
function $n^{\ceiling{{\log^{(k)}n})^a}}$ is well super-polynomial
function from $N\rightarrow N$.
\item
Assume $k$ and $a$ are fixed integers with $k>1$ and $a> 1$.  Let
$f(n)=\ceiling{n^{1/(\floor{\log^{(k)}n})^a}}$ and
$h(n)=n^{(\ceiling{\log^{(k)}n})^{a-1}}$, then $f(h(n))<n$ for all
large $n$.
\end{itemize}

\end{lemma}

\begin{proof}
It takes $O(\log n)$ time to compute $\floor{\log^{(k)}n}$. It takes
another $O(\log n)$ time to compute $(\floor{\log^{(k)}n})^a$ since
$a$ is a constant. It takes another $O(\log n)$ time to compute
$\ceiling{n^{1/(\floor{\log^{(k)}n})^a}}$. Since $\log
n=o(\ceiling{n^{1/(\floor{\log^{(k)}n})^a}})$, we have that the
function $\ceiling{n^{1/(\floor{\log^{(k)}n})^a}}$ is time
constructible.

The proof for the second statement is easier the first.

The proof the third statement is trivial. We have
\begin{eqnarray*}
f(h(n))&\le& f(n^{(\ceiling{\log^{(k)}n})^{a-1}})\\
&\le& f(n^{(2{\log^{(k)}n})^{a-1}})\\
&=&n^{o(1)}\\
 &<&n \ \ \ \mbox{for\ all\ large\ }n.
\end{eqnarray*}

\end{proof}

\section{Our Separation Theorem}\label{main-theorem-sec}

In this section, we present our separation theorem. The theorem is
achieved based on the translational method, which is combined a
counting method to count the number of all possible strings queried
by nondeterministic polynomial time oracle Turing machine.

\begin{definition}
For an oracle nondeterministic Turing machine $M(.)$ and oracle $A$,
and an integer $k$, define $Q(M, A, k)$ to be the set all strings
$z$ in $A$ such that $z$ is queried by $M^A(y)$ for some string $y$
of length $k$.
\end{definition}

\begin{lemma}\label{nondet-lemma}
Assume that $A\in \NP$ and $M(.)$ runs in time $t(n)$. Then there is
a nondeterministic machine $N(.)$ such that if $m\le |Q(M,A,n)|$, it
outputs a subset of $m$ different elements of $Q(M,A,n)$ in time
$mt(n)^{O(1)}$ in at least one path; and otherwise, it outputs empty
set in every path.
\end{lemma}

\begin{proof}
Let $M'$ be a polynomial time nondeterministic Turing machine to
accept $A$, and runs in time $n^c$ for a constant $c>0$. We design a
nondeterministic Turing machine $N(.)$. Let $N(.)$
nondeterministically selects $x_1,\cdots, x_m$, and selects a path
$p_i$ for each $M(x_i)$, when $M(x_i)$ queries $z_i$ on the path
$p_i$, also guess a path $q_i$ for $M'(z_i)$. $N$ outputs all
$z_1,\cdots, z_m$ if they are all different, and each $z_i$ is
accepted by $M(.)$ on path $q_i$.
\end{proof}

\begin{theorem} Assume that $f(n)$ is a well-subpolynomial function. Then
$\NEXP\not\subseteq \NP_{\rm T}({\NP\cap \D(2^{{f(n)}})})$.
\end{theorem}

\begin{proof}
Assume $\NEXP\not\subseteq \NP_{\rm T}({\NP\cap \D(2^{f(n)})})$. Let
$h(n)$ be a super-polynomial function such that for each constant
$c>0$, $f(h(n)^c)\le n$ for all large $n$. Let $d(n)=2^{{f(n)}}$.

 We apply the translational method to it. Let $L$ be an
arbitrary language in $\DT(2^{{h(n)}})$. Define
$L_1=\{x10^{h(|x|)-|x|-1}: x\in L\}$.

It is easy to see that $L_1$ is in NEXP. There is a language $A_1\in
\NP\cap \D(d(n))$ such that $L_1\in\NP_{\rm T}({A_1})$ via an
nondeterministic machine $M_1(.)$. Assume that $M_1(.)$ runs in time
$n^{c_1}$ for all $n\ge 2$.

Define the language $L_2=\{1^n0m: M_1(.)$ queries $m$ queries in
$A_1$ for the input strings $x10^{h(n)-n-1}$ with $|x|= n\}$. Since
$A_1$ is of density $d(n)$, $M_1(.)$ can query at most
$d(h(n)^{c_1})\le 2^n$ strings in $A_1$. By
Lemma~\ref{nondet-lemma}, $L_2$ is in $\NEXP$ . Thus, $L_2\in
\NP_{\rm T}({A_2})$ for $A_2\in \NP\cap \D(d(n))$ via some
nondeterministic Turing machine $M_2(.)$. Assume that $M_2(.)$ runs
in time $n^{c_2}$ for all $n\ge 2$, where $c_2$ is a positive
constant.

Define the language $L_3=\{(x,m):$ there are at least $m$ different
strings $z_1,\cdots, z_m$ in $Q(M_1,A_1, h(n))$, and
$M_1^{A_1}(x10^{h(|x|)-|x|-1)})$ has an accept path that receives
answer $1$ for each query in $\{z_1,\cdots, z_m\}$, and answer $0$
for each query not in $\{z_1,\cdots, z_m\}$ $\}$.

Assume $n=|x|$. $M_1(x10^{h(|x|)-|x|-1)})$ only queries the strings
of length at most $h(n)^{c_1}$. There are at most $d(h(n)^{c_1})\le
2^n$ strings in $A_1$. By Lemma~\ref{nondet-lemma}, we have $L_3\in
\NE$.  Thus, $L_3\in \NP_{\rm T}({A_3})$ for some $A_3\in \NP$ via
another polynomial time nondeterministic Turing machine $M_3(.)$.
Assume that $M_3(.)$ runs in time $n^{c_3}$ for all $n\ge 2$.

In order to find the largest number $m$ such that $1^n0m\in L_2$,
$m$ is always at most $2^n$. Thus, the length of $m$ is at most $n$.
Using the binary search, we can find the largest $m$ with $1^n0m\in
L_2$. Let $m_n$ be the largest $m$ with $1^n0m\in L_2$.  Since
$A_2\in \NP$, $m_n$ can be computed in $2^{n^{c_4}}$ time for some
positive constant $c_4$.

Assume that $m_n$ is known. We just check if $(x,m_n)\in L_3$, where
$n=|x|$. It is easy to see $x\in L$ if and only if
$x10^{h(n)-n-1}\in L_1$ if and only if $(x,m_n)\in L_3$. Since
$L_3\in \NP_{\rm T}({A_3})$ with $A_3\in \NP$, we only need
$2^{n^{c_5}}$ time to decide if $(x,m_n)\in L_3$, where $c_5$ is a
positive constant. Therefore, we can decide if $x\in L$ in
$2^{n^{c_5}}$ time. Therefore, $L\in \DT(2^{n^{c_5}})$. Since $L$ is
an arbitrary language in $\DT(2^{{h(n)}})$.   Function $h(n)$ is a
super-polynomial function. This brings $\DT(2^{{h(n)}})\subseteq
\DT(2^{n^{c_5}})$, which contradicts the well known hierarchy
theorem about the deterministic computational time complexity
classes.

\end{proof}

\begin{corollary}
$\NEXP\not= \NP_{\rm T}({\NP\cap \sparse})$.
\end{corollary}

\begin{corollary}
$\NEXP\not= \P_{\rm T}({\NP\cap \sparse})$.
\end{corollary}

\section{Improved Separation}\label{main2-theorem-sec}

In this section, we present an improved separation theorem.

\begin{lemma}\label{infinitely-cases-lemma}
Assume that $t(n)$ is a time constructible super-polynomial
function. There is a language $L\in \DT(2^{t(n)})$ such that for
every deterministic Turing machine $M(.)$ in time $2^{n^{O(1)}}$,
$M(.)$ does not $L^n$ for all sufficiently large $n$.
\end{lemma}

\begin{proof}
Let $M_1,\cdots, M_k,\cdots$ be the list of all Turing machines in
$2^{n^{O(1)}}$ such that $M_k$ runs in time $2^{t(n)/3}$. The
construction has infinitely phases for $n=1,2,\cdots$.

Phase $n$:

\qquad Let $M_1,\cdots, M_n$ be the set of all Turing machines to be
considered in this phase.

\qquad Let $L^n$ be a subset $S_n$ of all strings of length $n$ such
that $S_n$ is different from $L(M_i)^n$ for $i=1,\cdots, n$.

End of Phase $n$.

According to the construction of phase $n$. The language $L$ can be
computed in deterministic time $n2^n2^{t(n)/3}<2^{t(n)/2}$ for all
large $n$. It is easy to see for each Turing machine $M_i$,
$L(M_i)^n$ is different from $L^n$ for all large $n$.

\end{proof}

\begin{theorem} Assume that $g(n)$ is a function such that for a well-subpolynomial function $f(n)$, $g(n)\le f(n)$ at infinite many $n$s.
Then $\NEXP\not\subseteq \NP_{\rm T}({\NP\cap \D(2^{{g(n)}})})$.
\end{theorem}

\begin{proof}
Assume $\NEXP\subseteq \NP_{\rm T}({\NP\cap \D(2^{g(n)})})$. We will
bring a contradiction from this assumption. Let $h(n)$ be a
super-polynomial function such that for each constant $c>0$,
$f(h(n)^c)\le n$ for all large $n$. Let $d(n)=2^{{g(n)}}$. Since
$\NEXP$ has a complete language $K$ under $\le_m^{\P}$ reductions,
if $K\in NP_{\rm T}(S)$, then $\NEXP\subseteq \NP_{\rm T}(S)$. Let
$S$ be a language in $\NP\cap \D(d(n))$ such that $\NEXP\subseteq
\NP_{\rm T}(S)$. Let $u_1<u_2<\cdots <u_k<\cdots$ be the infinite
list of integers such that $|S|^{\le u_i}\le 2^{f(u_i)}$.

 We apply the translational method to it. Let $L$ be an
arbitrary language in $\DT(2^{{h(n)}})$. Define
$L_1=\{x10^{h(|x|)-|x|-1}: x\in L\}$.

 It is easy to see that $L_1$
is in NEXP. There is a polynomial time nondeterministic machine
$M_1(.)$ for $L_1\in \NP_{\rm T}(S)$ (In other words, $M_1^S(.)$
accepts $L$). Assume that $M_1(.)$ runs in time $n^{c_1}$ for all
$n\ge 2$.

Define the language $L_2=\{1^n0m: m\le 2\cdot 2^n$ and there are at
least $m$ different strings $z_1,\cdots, z_m$ in $Q(M_1,S, h(n))\}$.
Let $n_i$ be the largest integers such that $h(n_i)^{c_1}\le u_i$
for $i=1,2,\cdots$. Thus, we have $h(n_i+1)^{c_1}> u_i$.

Since $S$ is of density bounded by $d(n)$, $M_1(.)$ with inputs of
length $h(n_i)$ can query at most $d(h(n_i)^{c_1})\le d(u_i)\le
2^{f(u_i)}\le 2^{f(h(n_i+1)^{c_1})}\le 2^{n_i+1}\le 2\cdot 2^{n_i}$
strings in $S$. By Lemma~\ref{nondet-lemma}, $L_2$ is in $\NEXP$ .
Thus, $L_2\in \NP_{\rm T}({S})$ via some nondeterministic Turing
machine $M_2(.)$. Assume that $M_2(.)$ runs in time $n^{c_2}$ for
all $n\ge 2$, where $c_2$ is a positive constant.

%*********

Define the language $L_3=\{(x,m):$ $m\le 2\cdot 2^{|x|}$ and there
are at least $m$ different strings $z_1,\cdots, z_m$ in $Q(M_1,S,
h(n))$, and $M_1^{S}(x10^{h(|x|)-|x|-1)})$ has an accept path that
receives answer $1$ for each query in $\{z_1,\cdots, z_m\}$, and
answer $0$ for each query not in $\{z_1,\cdots, z_m\}$ $\}$.

%Assume $n_i=|x|$. $M_1(x10^{h(|x|)-|x|-1)})$ only queries the
%strings of length at most $h(n_i)^{c_1}$. There are at most $d(h(n_i)^{c_1})\le 2\cdot 2^{n_i}$ strings in $S$.
By Lemma~\ref{nondet-lemma}, we have $L_3\in \NE$.  Thus, $L_3\in
\NP_{\rm T}({S})$ via another polynomial time nondeterministic
Turing machine $M_3(.)$. Assume that $M_3(.)$ runs in time $n^{c_3}$
for all $n\ge 2$.

In order to find the largest number $m$ such that $1^n0m\in L_2$,
$m$ is always at most $2\cdot 2^{n}$. Thus, the length of $m$ is at
most $n+1$. Using the binary search, we can find the largest
$m_{n_i}$ with $1^{n_i}0m_{n_i}\in L_2$ for $i=1,2,\cdots$. Let
$m_{n_i}$ be the largest $m$ with $1^{n_i}0m\in L_2$ for
$i=1,2,\cdots$. Since $S\in \NP$,  $m_{n_i}$ can be computed in
$2^{{n_i}^{c_4}}$ time for some positive constant $c_4$ for all
$i=1,2,\cdots$.

Assume that $m_{n_i}$ is known. We just check if $(x,m_{n_i})\in
L_3$, where $n_i=|x|$. For $|x|=n_i$, we have  $x\in L$ if and only
if $x10^{h(n_i)-n_i-1}\in L_1$ if and only if $(x,m_{n_i})\in L_3$.
Since $L_3\in \NP_{\rm T}({S})$, we only need $2^{n^{c_5}}$ time to
decide if $(x,m_n)\in L_3$ for  $n=n_1,n_2,\cdots$, where $c_5$ is a
positive constant. Therefore, we can decide if $x\in L$ in
$2^{{n_i}^{c_5}}$ time for $|x|=n_i$. Therefore, there is a
deterministic Turing machine $M_*$ that runs in $2^{n^{c_5}}$ time
and accepts $L^{n}$ for $n=n_1,n_2,\cdots$. Since $L$ is an
arbitrary language in $\DT(2^{{h(n)}})$. Function $h(n)$ is a
super-polynomial function. This brings there is a deterministic
Turing machine $M_*$ that runs in $2^{n^{c_5}}$ time and accepts
$L^{n_i}$ for $n=n_1,n_2,\cdots$, which contradicts
Lemma~\ref{infinitely-cases-lemma}.

\end{proof}